\newcommand{\algorithmConfig}{
  \DontPrintSemicolon
	\SetKw{Or}{or}
	\SetKwInOut{Input}{Input}
	\SetKwInOut{Output}{Output}
}
\theoremstyle{definition}
\newtheorem{definition}{Definition}
\newtheorem{remark}{Remark}
\theoremstyle{theorem}
\newtheorem{proposition}{Proposition}
\theoremstyle{lemma}
\newtheorem{lemma}{Lemma}
\theoremstyle{notation}
\newtheorem{notation}{Notation}
\newcommand{\footref}[1]{$^{\ref{#1}}$}
\newcommand{\Z}{\mathbb{Z}}
\newcommand{\ie}{\textit{i.e.}}
\newcommand{\eg}{\textit{e.g.}}
\newcommand{\etal}{\textit{et al.}\xspace}
\newcommand{\etc}{\textit{etc.}\xspace}
\title{
  \texorpdfstring{
    \textbf{Countermeasures Against High-Order\\Fault-Injection Attacks on CRT-RSA}
  }{
    Countermeasures Against High-Order Fault-Injection Attacks on CRT-RSA
  }
}
\author{
  Pablo Rauzy and Sylvain Guilley\\
  Institut Mines-Télécom ; Télécom ParisTech ; CNRS LTCI\\
  \{{\it firstname}.{\it lastname}\}@telecom-paristech.fr
}
\date{} 
\begin{document}
\maketitle

\begin{abstract}
  In this paper we study the existing CRT-RSA countermeasures against fault-injection attacks.
  In an attempt to classify them we get to achieve deep understanding of how they work.
  We show that the many countermeasures that we study (and their variations) actually share a number of common features, but optimize them in different ways.
  We also show that there is no conceptual distinction between test-based and infective countermeasures and how either one can be transformed into the other.
  Furthermore, we show that faults on the code (skipping instructions) can be captured by considering only faults on the data.
  These intermediate results allow us to improve the state of the art in several ways:
  \begin{inparaenum}[(a)]
  \item we fix an existing and that was known to be broken countermeasure (namely the one from Shamir);
  \item we drastically optimize an existing countermeasure (namely the one from Vigilant)
    which we reduce to 3 tests instead of 9 in its original version, and prove that it resists not only one fault but also an arbitrary number of randomizing faults;
  \item we also show how to upgrade countermeasures to resist any given number of faults: given a correct first-order countermeasure, we present a way to design a provable high-order countermeasure (for a well-defined and reasonable fault model).
  \end{inparaenum}
  Finally, we pave the way for a generic approach against fault attacks for any modular arithmetic computations, and thus for the automatic insertion of countermeasures.
\end{abstract}

\section{Introduction}
\label{intro}

Private information protection is a highly demanded feature, especially in the current context of global defiance against most infrastructures, assumed to be controlled by governmental agencies.
Properly used cryptography is known to be a key building block for secure information exchange.
However, in addition to the threat of cyber-attacks, implementation-level hacks must also be considered seriously.
This article deals specifically with the protection of a \emph{decryption} or \emph{signature} crypto-system (called RSA~\cite{DBLP:journals/cacm/RivestSA78}) in the presence of hardware attacks (\eg, we assume the attacker can alter the RSA computation while it is being executed).

It is known since 1997 (with the BellCoRe attack by Boneh \etal~\cite{boneh-fault}) that injecting faults during the computation of CRT-RSA (CRT for ``Chinese Remainder Theorem'') could yield to malformed signatures that expose the prime factors ($p$ and $q$) of the public modulus ($N=p \cdot q$).
Notwithstanding, computing without the fourfold acceleration conveyed by the CRT optimization is definitely not an option in practical applications.
Therefore, many countermeasures have appeared.
Most of the existing countermeasures were designed with an attack-model consisting in a single fault injection.
The remaining few attempts to protect against second-order fault attacks (\ie, attacks with two faults).

Looking at the history of the development of countermeasures against the BellCoRe attack, we see that many countermeasures are actually broken in the first place.
Some of them were fixed by their authors and/or other people, such as the countermeasure proposed by Vigilant~\cite{DBLP:conf/ches/Vigilant08}, which was fixed by Coron \etal~\cite{DBLP:conf/fdtc/CoronGMPV10} and then simplified by Rauzy~\&~Guilley~\cite{PR:PPREW14}; some simply abandoned, such as the one by Shamir~\cite{shamir-patent-rsa-crt}.
Second-order countermeasures are no exception to that rule, as demonstrated with the countermeasure proposed by Ciet~\&~Joye~\cite{cietjoye:fdtc05}, which was fixed later by Dottax \etal~\cite{DBLP:conf/wistp/DottaxGRS09}.
Such mistakes can be explained by two main points:
\begin{compactitem}
\item the almost nonexistent use of formal methods in the field of implementation security, which can itself be explained by the difficulty to properly model the physical properties of an implementation which are necessary to study side-channel leakages and fault-injection effects;
\item the fact that most countermeasures were developed by trial-and-error engineering, accumulating layers of intermediate computations and verifications to patch weaknesses until a fixed point was reached, even if the inner workings of the countermeasure were not fully understood.
\end{compactitem}
Given their development process, it is likely the case that existing second-order countermeasures would not resist third-order attacks, and strengthening them against such attacks using the same methods will not make them resist fourth-order, \etc 

The purpose of this paper is to remedy to these problems.
First-order countermeasures have started to be formally studied by Christofi \etal~\cite{JCEN-Christofi13}, who have been followed by Rauzy~\&~Guilley~\cite{PR:JCEN13,PR:PPREW14}, and Barthe \etal~\cite{DBLP:journals/iacr/BartheDFGTZ14}.
To our best knowledge, no such work has been attempted on high-order countermeasures.
Thus, we should understand the working factors of a countermeasure, and use that knowledge to informedly design a generic high-order countermeasure, either one resisting any number of faults, or one which could be customized to protect against $n$ faults, for any given $n \geqslant 1$.

Notice that we consider RSA used in a mode where the BellCoRe attack is applicable;
this means that we assume that the attacker can choose (but not necessarily knows) the message that is exponentiated,
which is the case in \emph{decryption} mode or in (outdated) \emph{deterministic signature} mode (\eg, PKCS \#1 v1.5).
In some other modes, formal proofs of security have been conducted~\cite{DBLP:conf/asiacrypt/CoronM09,DBLP:journals/iacr/BartheDFGTZ14}.

\paragraph*{Contributions}
In this paper we propose a classification of the existing CRT-RSA countermeasures against the BellCoRe fault-injection attacks.
Doing so, we raise questions whose answers lead to a deeper understanding of how the countermeasures work.
We show that the many countermeasures that we study (and their variations) are actually applying a common protection strategy but optimize it in different ways (Sec.~\ref{essence}).
We also show that there is no conceptual distinction between test-based and infective countermeasures and how either one can be transformed into the other (Prop.~\ref{prop:equiv-test-infective}).
Furthermore, we show that faults on the code (skipping instructions) can be captured by considering only faults on the data (Lem.~\ref{lem:equiv-fault-code-data}).
These intermediate results allow us to improve the state of the art in several ways:
\begin{compactitem}
\item we fix an existing and that is known to be broken countermeasure (Alg.~\ref{alg:fixed-shamir});
\item we drastically optimize an existing countermeasure, while at the same time we transform it to be infective instead of test-based (Alg.~\ref{alg:simplified-vigilant});
\item we also show how to upgrade countermeasures to resist any given number of faults: given a correct first-order countermeasure, we present a way to design a provable high-order countermeasure for a well defined and reasonable fault model (Sec.~\ref{essence:order}).
\end{compactitem}
Finally, we pave the way for a generic approach against fault attacks for any modular arithmetic computations, and thus for the automatic insertion of countermeasures.

\paragraph*{Organization of the paper}
We recall the CRT-RSA cryptosystem and the BellCoRe attack in Sec.~\ref{crtrsa-bellcore}.
Then, to better understand the existing countermeasures, we attempt to classify them in Sec.~\ref{classify}, which also presents the state of the art.
We then try to capture what make the essence of a countermeasure in Sec.~\ref{essence}, and use that knowledge to determine how to build a high-order countermeasure.
We last use our findings to build better countermeasures by fixing and simplifying existing ones in Sec.~\ref{building}.
Conclusions and perspectives are drawn in Sec.~\ref{conclusions}.
The appendices contain the detail of some secondary results.

\section{CRT-RSA and the BellCoRe Attack}
\label{crtrsa-bellcore}

This section summarizes known results about fault attacks on CRT-RSA (see also \cite{koc_rsa}, \cite[Chap. 3]{Intro_HOST} and \cite[Chap. 7~\&~8]{fabook}).
Its purpose is to settle the notions and the associated notations that will be used in the later sections, to present our novel contributions.

\subsection{RSA}
\label{sec:RSA}

RSA is both an \emph{encryption} and a \emph{signature} scheme.
It relies on the fact that for any message $0\leq M<N$,
$(M^d)^e \equiv M \mod N$, where $d \equiv e^{-1} \mod \varphi(N)$, by Euler's theorem%
\footnote{We use the usual convention in all mathematical equations, namely that the ``$\text{mod}$'' operator has the lowest binding precedence,
\ie, $a \times b \mod c \times d$ represents the element $a \times b$ in $\Z_{c \times d}$.}.
In this equation, $\varphi$ is Euler's totient function, equal to $\varphi(N) = (p - 1) \cdot (q - 1)$ when $N = p \cdot q$ is a composite number, product of two primes $p$ and $q$.
For example, if Alice generates the signature $S = M^d \mod N$,
then Bob can verify it by computing $S^e \mod N$, which must be equal to $M$ unless Alice is only pretending to know $d$.
Therefore $(N,d)$ is called the private key, and $(N,e)$ the public key.
In this paper, we are not concerned about the key generation step of RSA,
and simply assume that $d$ is an unknown number in $\llbracket 1, \varphi(N)=(p-1) \cdot (q-1)\llbracket$.
Actually, $d$ can also be chosen to be equal to the smallest value $e^{-1} \mod \lambda(N)$,
where $\lambda(N) = \frac{(p-1) \cdot (q-1)}{\gcd(p-1, q-1)}$ is the Carmichael function (see PKCS \#1 v2.1, \S 3.1).

\subsection{CRT-RSA}
\label{sec:CRT_RSA}

The computation of $M^d \mod N$ can be speeded-up by a factor of four using the Chinese Remainder Theorem (CRT).
Indeed, numbers modulo $p$ and $q$ are twice as short as those modulo $N$.
For example, for $2,048$~bits RSA, $p$ and $q$ are $1,024$~bits long.
CRT-RSA consists in computing $S_p = M^d \mod p$ and $S_q = M^d \mod q$,
which can be recombined into $S$ with a limited overhead.
Due to the little Fermat theorem (the special case of the Euler theorem when the modulus is a prime),
$S_p = (M \mod p)^{d \mod (p-1)} \mod p$.
This means that in the computation of $S_p$, the processed data have $1,024$~bits,
and the exponent itself has $1,024$~bits (instead of $2,048$~bits).
Thus the multiplication is four times faster and the exponentiation eight times faster.
However, as there are two such exponentiations (modulo $p$ and $q$), the overall CRT-RSA is roughly speaking four times faster than RSA computed modulo $N$.

This acceleration justifies that CRT-RSA is always used if the factorization of $N$ as $p \cdot q$ is known.
In CRT-RSA, the private key has a richer structure than simply $(N,d)$:
it is actually the $5$-tuple $(p, q, d_p, d_q, i_q)$, where:
\begin{compactitem}
\item $d_p \doteq d \mod (p-1)$,
\item $d_q \doteq d \mod (q-1)$, and
\item $i_q \doteq q^{-1} \mod p$.
\end{compactitem}
\medskip

The CRT-RSA algorithm is presented in Alg.~\ref{alg:unprotected}.
It is straightforward to check that the signature computed at line~\ref{alg:unprotected:S}
belongs to $\llbracket 0, p \cdot q -1 \rrbracket$.
Consequently, no reduction modulo $N$ is necessary before returning $S$.

\begin{algorithm*}
\algorithmConfig
\caption{Unprotected CRT-RSA}
\label{alg:unprotected}
\Input{Message $M$, key $(p, q, d_p, d_q, i_q)$}
\Output{Signature $M^d \mod N$}
\BlankLine
$S_p = M^{d_p} \mod p$ \tcp*[r]{Intermediate signature in $\Z_p$}
$S_q = M^{d_q} \mod q$ \tcp*[r]{Intermediate signature in $\Z_q$}
\BlankLine
$S = S_q + q \cdot (i_q \cdot (S_p - S_q) \mod p)$ \tcp*[r]{Recombination in $\Z_N$ (Garner's method~\cite{DBLP:journals/ac/Garner65})\label{alg:unprotected:S}}
\BlankLine
\Return $S$
\end{algorithm*}

\subsection{The BellCoRe Attack}

In 1997, a dreadful remark has been made by Boneh, DeMillo and Lipton~\cite{boneh-fault}, three staff of Bell Communication Research:
Alg.~\ref{alg:unprotected} could reveal the secret primes $p$ and $q$ if the line 1 or 2 of the computation is faulted, even in a very random way.
The attack can be expressed as the following proposition.

\begin{proposition}[BellCoRe attack]
\label{prop:bellcore}
If the intermediate variable $S_p$ (resp. $S_q$) is returned faulted as $\widehat{S_p}$ (resp. $\widehat{S_q}$)\footnote{In other papers, the faulted variables (such as $X$) are written either as $X^*$ or $\tilde{X}$; in this paper, we use a hat which can stretch to cover the adequate portion of the expression, as it allows to make an unambiguous difference between $\widehat{X}^e$ and $\widehat{X^e}$.},
then the attacker gets an erroneous signature $\widehat{S}$,
and is able to recover $q$ (resp. $p$) as $\gcd(N, S-\widehat{S})$.
\end{proposition}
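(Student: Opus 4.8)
The plan is to track the faulted signature $\widehat{S}$ modulo each of the two primes separately and compare it with the correct signature $S$, exploiting the fact that by construction $S \equiv S_p \pmod{p}$ and $S \equiv S_q \pmod{q}$. I treat the case where $S_p$ is faulted into $\widehat{S_p}$; the case of $S_q$ is entirely symmetric by exchanging the roles of $p$ and $q$.

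First I would record what the correct recombination of Alg.~\ref{alg:unprotected} guarantees. The term $q \cdot (i_q \cdot (S_p - S_q) \mod p)$ is a multiple of $q$, so $S \equiv S_q \pmod{q}$; and since $i_q \equiv q^{-1} \pmod{p}$, reducing the recombination modulo $p$ makes the factor $q$ cancel and collapses the expression to $S \equiv S_p \pmod{p}$. This is just a restatement of the CRT, confirming that the output is indeed $M^d \mod N$.

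Next I would perform the same two reductions on the faulted output $\widehat{S} = S_q + q \cdot (i_q \cdot (\widehat{S_p} - S_q) \mod p)$. Modulo $q$ nothing changes, because the faulted quantity sits entirely inside a multiple of $q$; hence $\widehat{S} \equiv S_q \equiv S \pmod{q}$, which gives $q \mid (S - \widehat{S})$. Modulo $p$, the factor $i_q$ again cancels $q$, so $\widehat{S} \equiv \widehat{S_p} \pmod{p}$, and therefore $S - \widehat{S} \equiv S_p - \widehat{S_p} \pmod{p}$.

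The conclusion then follows from computing $\gcd(N, S - \widehat{S}) = \gcd(p \cdot q, S - \widehat{S})$: the prime $q$ divides $S - \widehat{S}$, while the prime $p$ divides it if and only if $\widehat{S_p} \equiv S_p \pmod{p}$. The step needing care --- and the one that explains the phrase ``even in a very random way'' --- is exactly this last point: for the attack to expose $q$ one needs $p \nmid (S - \widehat{S})$, i.e. the fault must not accidentally leave $\widehat{S_p}$ congruent to $S_p$ modulo $p$. For a fault drawn at random this fails only with probability of order $1/p$, which is negligible for cryptographic sizes. Outside of that negligible coincidence we have $q \mid (S - \widehat{S})$ and $p \nmid (S - \widehat{S})$, so $\gcd(N, S - \widehat{S}) = q$, as claimed.
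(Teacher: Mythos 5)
Your proof is correct and follows essentially the same route as the paper's: reduce $S-\widehat{S}$ modulo each prime, observe that the fault changes the residue modulo exactly one of them, and read off the gcd. The only differences are cosmetic --- the paper writes out both cases and the explicit factorization $S-\widehat{S} = q\cdot\bigl((i_q(S_p-S_q)\bmod p) - (i_q(\widehat{S_p}-S_q)\bmod p)\bigr)$, while you treat one case with CRT congruences, invoke symmetry for the other, and make explicit the negligible-probability caveat that the fault must not leave $\widehat{S_p} \equiv S_p \pmod{p}$, a condition the paper leaves implicit.
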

\begin{proof}
For any integer $x$, $\gcd(N, x)$ can only take $4$ values:
\begin{compactitem}
\item $1$, if $N$ and $x$ are coprime,
\item $p$, if $x$ is a multiple of $p$,
\item $q$, if $x$ is a multiple of $q$,
\item $N$, if $x$ is a multiple of both $p$ and $q$, \ie, of $N$.
\end{compactitem}
\medskip

In Alg.~\ref{alg:unprotected}, if $S_p$ is faulted (\ie, replaced by $\widehat{S_p} \neq S_p$), then
$S-\widehat{S} = q \cdot (( i_q \cdot (S_p-S_q)\mod p) - ( i_q \cdot (\widehat{S_p}-S_q) \mod p))$,
and thus $\gcd(N, S-\widehat{S}) = q$.
If $S_q$ is faulted (\ie, replaced by $\widehat{S_q} \neq S_q$), then
$S-\widehat{S} \equiv (S_q - \widehat{S_q}) - ( q \mod p) \cdot i_q \cdot (S_q - \widehat{S_q}) \equiv 0 \mod p$ because $(q \mod p) \cdot i_q \equiv 1 \mod p$, and thus $S-\widehat{S}$ is a multiple of $p$.
Additionally, $S-\widehat{S}$ is not a multiple of $q$.
So, $\gcd(N, S-\widehat{S})=p$.
\end{proof}

Before continuing to the next section, we will formalize our attack model by defining what is a fault injection and what is the order of an attack.

\begin{definition}[Fault injection]
\label{def:fault}
During the execution of an algorithm, the attacker can:
\begin{compactitem}
\item modify any intermediate value by setting it to either a \emph{random value} (\emph{randomizing fault}) or \emph{zero} (\emph{zeroing fault});
such a fault can be either \emph{permanent} (\eg, in memory) or \emph{transient} (\eg, in a register or a bus);
\item skip any number of consecutive instructions (\emph{skipping fault}).
\end{compactitem}
At the end of the computation the attacker can read the result returned by the algorithm.
\end{definition}

\begin{remark}
This fault injection model implies that faults can be injected very accurately in timing (the resolution is the clock period),
whereas the fault locality in space is poor (the attacker cannot target a specific bit).
This models an attacker who is able to identify the sequence of operations by a simple side-channel analysis,
but who has no knowledge of the chip internals. 
Such attack model is realistic for designs where the memories are scrambled and the logic gates randomly routed (in a sea of gates).
\end{remark}

\begin{lemma}
\label{lem:equiv-fault-code-data}
The effect of a skipping fault (\ie, fault on the code) can be captured by considering only randomizing and zeroing faults (\ie, fault on the data).
\end{lemma}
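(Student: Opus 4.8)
The plan is to show that the set of final outputs reachable by a skipping fault is contained in the set reachable by randomizing and zeroing faults, so that any security proof quantified over \emph{all} data faults automatically covers skips. First I would normalize every instruction to a plain assignment $x \leftarrow e$ to some program variable (register or memory cell): the arithmetic lines are already of this form, and a conditional \texttt{if}/\texttt{return} can be rewritten as an assignment to a status variable followed by a later use of it, which is precisely the viewpoint underlying the test-based/infective equivalence used elsewhere in the paper. Under this normalization an execution is a straight-line sequence of assignments, control flow is encoded as data, and the only effect of skipping a (consecutive block of) instruction(s) is that the targeted variables are not refreshed and therefore keep the value they held just before the block.

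The key step is the observation that ``keeping a stale value'' is a special case of a randomizing fault. Concretely, if skipping $x \leftarrow e$ leaves $x$ equal to its previous content $v$, then exactly the same final state is produced by letting $x \leftarrow e$ execute normally and then hitting $x$ with a randomizing fault whose drawn value happens to be $v$ (or with a zeroing fault when $v = 0$, which in particular covers an uninitialised variable defaulting to $0$). Since $v$ lies in the domain over which the randomizing fault ranges, it is one of its admissible outcomes. Resisting randomizing faults means, by Def.~\ref{def:fault}, that the attack must fail for \emph{every} value the fault could produce; in particular it fails for the value $v$ realised by the skip. Hence no skip can reach an outcome that some data fault cannot, and restricting the analysis to data faults loses no attack. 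I would note that the fact that a randomizing fault yields an adversary-\emph{uncontrolled} value, whereas a skip yields a specific correlated stale value, is harmless precisely because the worst-case (for-all-values) quantification subsumes that single value regardless of its correlation with the rest of the state.

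The delicate point is accounting for the \emph{order} of the attack. A single skipping fault may erase a whole consecutive block and thereby leave several distinct variables stale at once, which a priori corresponds to several simultaneous data faults. I would resolve this by observing that the variables left stale by one contiguous skip are exactly those written inside the block, and that the simulating data faults are placed at the block's exit, so one skip maps to corrupting the block's live outputs; when the block computes a single logical quantity (\eg, an entire modular exponentiation producing $S_p$) this collapses to one data fault, and in general the count needed is bounded by the number of live outputs of the block. Making this bound explicit, and checking that it is consistent with the fault-counting used in the high-order analysis of Sec.~\ref{essence:order}, is the main thing to get right; once the normalization and the stale-value-as-randomizing-fault identification are in place, the remaining reasoning is routine.
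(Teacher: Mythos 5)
Your proof takes a genuinely different route from the paper's — a simulation/containment argument (every post-skip state is reachable by some data fault, hence security transfers) — but its key step has a real gap, and it is exactly at the point you flag as harmless. You claim that, by Def.~\ref{def:fault}, ``resisting randomizing faults means the attack must fail for \emph{every} value the fault could produce.'' That is not the security notion this paper works with. In Sec.~\ref{classify:family} the paper assumes faults are randomly distributed, notes that an undetected fault then has probability $2^{-32}$ or $2^{-64}$, and explicitly makes ``a language abuse by considering that such probability is equal to zero'': security against a randomizing fault is a statement about a \emph{fresh, uncorrelated} random value (in the formal analyses it is a generic, symbolic value), and it tolerates a negligible exception set of values on which detection fails. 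The stale value $v$ left by a skip is not a fresh draw: it is a deterministic function of the program state, realized with probability $1$, and it can sit squarely inside that negligible exception set. Concretely, if the checksum $S_{pr}$ of Alg.~\ref{alg:joye} is written into a location whose current content happens to be congruent to $S'_p \bmod r_1$ (a reused temporary, or a skipped update of the form $x \leftarrow x - y$ leaving $x$ unchanged), then skipping that write makes the test at line~\ref{alg:joye:verif} pass with certainty, whereas a randomizing fault at the same spot is ``never'' undetected in the paper's sense. So containment of reachable states does not transfer security under a probabilistic (or generic-value) notion. The paper closes this hole by fiat rather than by simulation: its proof \emph{asserts, as a modeling assumption}, that after a skip the destination holds zero (if initialized), uninitialized garbage, or a partially computed result that is ``pseudo-randomized (and considered random at our modeling level)'', and it handles a skipped branch separately as a zeroing fault on the condition. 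Your argument becomes correct only if you import that same assumption (at which point it collapses to the paper's proof), or if you move to a worst-case fault model that the paper does not adopt.

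Your final observation — that one skip of a consecutive block may require \emph{several} simultaneous data faults to simulate, which matters for fault-order accounting — is a genuine point that the paper's one-line proof ignores (Def.~\ref{def:fault} counts a skip of ``any number of consecutive instructions'' as a single fault). But you leave it unresolved, and it is not a routine detail: in Sec.~\ref{essence:order} the $n$-fold replicated verifications must not be placed consecutively, since otherwise a single skipping fault erases all $n$ of them while corresponding, under your translation, to $n$ data faults. Your proposed bound (number of live outputs of the block) does not reconcile the two ways of counting; it exhibits the discrepancy. A complete proof would have to either restrict the simulation to one data fault per skip (which is what the paper implicitly does by treating the block's output as a single pseudo-randomized quantity) or make the order-preservation claim explicit and prove the high-order statement under the weaker, many-faults-per-skip correspondence.
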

\begin{proof}
Indeed, if the skipped instructions are part of an arithmetic operation:
\begin{compactitem}
\item either the computation has not been done at all and the value in memory where the result is supposed to be stays zero (if initialized) or random (if not),
\item or the computation has partly been done and the value written in memory as its result is thus pseudo-randomized (and considered random at our modeling level).
\end{compactitem}
If the skipped instruction is a branching instruction, then it is equivalent to do a zeroing fault on the result of the branching condition to make it false and thus avoid branching.
\end{proof}

\begin{definition}[Attack order]
\label{def:order}
We call \emph{order} of the attack the number of fault injections in the computation.
An attack is said to be \emph{high-order} if its order is strictly more than $1$.
\end{definition}

\section{Classifying Countermeasures}
\label{classify}

The goal of a countermeasure against fault-injection attacks is to avoid returning a compromised value to the attacker.
To this end, countermeasures attempt to verify the integrity of the computation before returning its result.
If the integrity is compromised, then the returned value should be a random number or an error constant, in order not to leak any information.

An obvious way of achieving that goal is to repeat the computation and compare the results, but this approach is very expensive in terms of computation time.
The same remark applies to the verification of the signature
(notice that $e$ can be recovered for this purpose from the $5$-tuple $(p, q, d_p, d_q, i_q)$, as explained in App.~\ref{sec-e_d}).
In this section we explore the different methods used by the existing countermeasures to verify the computation integrity faster than $(M^d)^e \stackrel{?}{\equiv} M \mod N$.

\subsection{Shamir's or Giraud's Family of Countermeasures}
\label{classify:family}

To the authors knowledge, there are two main families of countermeasures:
those which are descendants of Shamir's countermeasure~\cite{shamir-patent-rsa-crt},
and those which are descendants of Giraud's~\cite{DBLP:journals/tc/Giraud06}.

The countermeasures in Giraud's family avoid replicating the computations using particular exponentiation algorithms.
These algorithms keep track of variables involved in intermediate steps;
those help verifying the consistency of the final results by a consistency check of an invariant that is supposed to be spread till the last steps.
This idea is illustrated in Alg.~\ref{alg:giraud}, which resembles the one of Giraud.
The test at line~\ref{alg:giraud:verif} verifies that the recombined values $S$ and $S'$ (recombination of intermediate steps of the exponentiation) are consistent.
Example of other countermeasures in this family are the ones of Boscher \etal~\cite{DBLP:conf/wistp/BoscherNP07}, Rivain~\cite{cryptoeprint:2009:165} (and its recently improved version~\cite{DBLP:conf/ctrsa/LeRT14}), or Kim \etal~\cite{Kim:2011:ECA:2010601.2010865}.
The former two mainly optimize Giraud's, while the latter introduce an infective verification based on binary masks.
The detailed study of the countermeasures in Giraud's family is left as future work.

\begin{algorithm*}
\algorithmConfig
\caption{CRT-RSA with a Giraud's family countermeasure}
\label{alg:giraud}
\Input{Message $M$, key $(p, q, d_p, d_q, i_q)$}
\Output{Signature $M^d \mod N$, or \textsf{error}}
\BlankLine
$(S_p, S'_p) = \mathsf{ExpAlgorithm}(M, d_p)$ \tcp*[r]{$\mathsf{ExpAlgorithm}(a, b)$ returns $(a^b, a^{b-1})$}
$(S_q, S'_q) = \mathsf{ExpAlgorithm}(M, d_q)$ \;
\BlankLine
$S = S_q + q \cdot (i_q \cdot (S_p - S_q) \mod p)$ \tcp*[r]{Recombination}
$S' = S'_q + q \cdot (i_q \cdot (S'_p - S'_q) \mod p)$ \tcp*[r]{Recombination for verification}
\BlankLine
\lIf{$M \cdot S' \not\equiv S \mod pq$ \label{alg:giraud:verif}}{
  \Return\textsf{error}
}
\BlankLine
\Return $S$
\end{algorithm*}

Indeed, the rest of our paper is mainly concerned with Shamir's family of countermeasures.
The countermeasures in Shamir's family rely on a kind of ``checksum'' of the computation using smaller numbers (the checksum is computed in rings smaller than the ones of the actual computation).
The base-two logarithm of the smaller rings cardinal is typically equal to $32$ or to $64$ (bits):
therefore, assuming that the faults are randomly distributed, the probability of having an undetected fault is $2^{-32}$ or $2^{-64}$,
\ie, very low.
In the sequel, we will make a language abuse by considering that such probability is equal to zero.
We also use the following terminology:
\begin{notation}
Let $a$ a big number and $b$ a small number, such that they are coprime.
We call the ring $\Z_{a b}$ an \emph{overring} of $\Z_a$, and
the ring $\Z_b$ a \emph{subring} of $\Z_{a b}$.
\end{notation}
\begin{remark}
RSA is friendly 
to protections by checksums because it computes in rings $\Z_a$ where $a$ is either a large prime number (\eg, $a=p$ or $a=q$) or the product of large prime numbers (\eg, $a=p \cdot q$).
Thus, any small number $b>1$ is coprime with $a$,
and so we have an isomorphism between the \emph{overring} $\Z_{ab}$ and the direct product of $\Z_a$ and $\Z_b$, \ie,
$\Z_{ab} \cong \Z_a \times \Z_b$.
This means that the Chinese Remainder Theorem applies.
Consequently, the nominal computation and the checksum can be conducted in parallel in $\Z_{ab}$.
\end{remark}
The countermeasures attempt to assert that some invariants on the computations and the checksums hold.
There are many different ways to use the checksums and to verify these invariants.
In the rest of this section we review these ways while we attempt to classify countermeasures and understand better what are the necessary invariants to verify.

\subsection{Test-Based or Infective}
\label{classify:test-infec}

A first way to classify countermeasures is to separate
those which consist in step-wise internal checks during the CRT computation
and those which use an infective computation strategy to make the result unusable by the attacker in case of fault injection.

\begin{definition}[Test-based countermeasure]
\label{def:test-based}
A countermeasure is said to be \emph{test-based} if it attempts to detect fault injections by verifying that some arithmetic invariants are respected, and branch to return an error instead of the numerical result of the algorithm in case of invariant violation.
Examples of test-based countermeasures are the ones of Shamir~\cite{shamir-patent-rsa-crt}, Aumüller \etal~\cite{DBLP:conf/ches/AumullerBFHS02}, Vigilant~\cite{DBLP:conf/ches/Vigilant08}, or Joye \etal~\cite{joye01:crt-rsa}.
\end{definition}

\begin{definition}[Infective countermeasure]
\label{def:infective}
A countermeasure is said to be \emph{infective} if rather than testing arithmetic invariants it uses them to compute a neutral element of some arithmetic operation in a way that would not result in this neutral element if the invariant is violated.
It then uses the results of these computations to infect the result of the algorithm before returning it to make it unusable by the attacker (thus, it does not need branching instructions).
Examples of infective countermeasures are the ones by Blömer \etal~\cite{DBLP:conf/ccs/BlomerOS03}, Ciet~\&~Joye~\cite{cietjoye:fdtc05}, or Kim \etal~\cite{Kim:2011:ECA:2010601.2010865}.
\end{definition}

The extreme similarity between the verifications in the test-based countermeasure of Joye \etal~\cite{joye01:crt-rsa} (see Alg.~\ref{alg:joye}, line~\ref{alg:joye:verif}) and the infective countermeasure of Ciet~\&~Joye~\cite{cietjoye:fdtc05} (see Alg.~\ref{alg:cietjoye}, lines~\ref{alg:cietjoye:c1} and~\ref{alg:cietjoye:c2}) is striking, but it is actually not surprising at all, as we will discover in Prop.~\ref{prop:equiv-test-infective}.

\begin{algorithm*}
\algorithmConfig
\caption{CRT-RSA with Joye \etal's countermeasure~\cite{joye01:crt-rsa}}
\label{alg:joye}
\Input{Message $M$, key $(p, q, d_p, d_q, i_q)$}
\Output{Signature $M^d \mod N$, or \textsf{error}}
\BlankLine
Choose two small random integers $r_1$ and $r_2$. \;
Store in memory $p' = p \cdot r_1$, $q' = q \cdot r_2$, $i'_q = q'^{-1} \mod p'$, $N = p \cdot q$. \;
\BlankLine
$S'_p = M^{d_p \mod \varphi(p')} \mod p'$ \tcp*[r]{Intermediate signature in $\Z_{pr_1}$}
$S_{pr} = M^{d_p \mod \varphi(r_1)} \mod r_1$ \tcp*[r]{Checksum in $\Z_{r_1}$}
\BlankLine
$S'_q = M^{d_q \mod \varphi(q')} \mod q'$ \tcp*[r]{Intermediate signature in $\Z_{qr_2}$}
$S_{qr} = M^{d_q \mod \varphi(r_2)} \mod r_2$ \tcp*[r]{Checksum in $\Z_{r_2}$}
\BlankLine
$S_p = S'_p \mod p$ \tcp*[r]{Retrieve intermediate signature in $\Z_p$}
$S_q = S'_q \mod q$ \tcp*[r]{Retrieve intermediate signature in $\Z_q$}
\BlankLine
\lIf{$S'_p \not\equiv S_{pr} \mod r_1$ \Or $S'_q \not\equiv S_{qr} \mod r_2$ \label{alg:joye:verif}}{
  \Return\textsf{error}
}
\BlankLine
\Return $S = S_q + q \cdot (i_q \cdot (S_p - S_q) \mod p)$ \tcp*[r]{Recombination in $\Z_{N}$}
\end{algorithm*}

\begin{algorithm*}
\algorithmConfig
\caption{CRT-RSA with Ciet~\&~Joye's countermeasure~\cite{cietjoye:fdtc05}}
\label{alg:cietjoye}
\Input{Message $M$, key $(p, q, d_p, d_q, i_q)$}
\Output{Signature $M^d \mod N$, or a random value in $\Z_N$}
\BlankLine
Choose small random integers $r_1$, $r_2$, and $r_3$. \;
Choose a random integer $a$. \;
Initialize $\gamma$ with a random number \;
Store in memory $p' = p \cdot r_1$, $q' = q \cdot r_2$, $i'_q = q'^{-1} \mod p'$, $N = p \cdot q$. \;
\BlankLine
$S'_p = a + M^{d_p \mod \varphi(p')} \mod p'$ \tcp*[r]{Intermediate signature in $\Z_{pr_1}$}
$S_{pr} = a + M^{d_p \mod \varphi(r_1)} \mod r_1$ \tcp*[r]{Checksum in $\Z_{r_1}$}
\BlankLine
$S'_q = a + M^{d_q \mod \varphi(q')} \mod q'$ \tcp*[r]{Intermediate signature in $\Z_{qr_2}$}
$S_{qr} = a + M^{d_q \mod \varphi(r_2)} \mod r_2$ \tcp*[r]{Checksum in $\Z_{r_2}$}
\BlankLine
$S' = S'_q + q' \cdot (i'_q \cdot (S'_p - S'_q) \mod p')$ \tcp*[r]{Recombination in $\Z_{Nr_1r_2}$\label{alg:cietjoye:S'}}
\BlankLine
$c_1 = S' - S_{pr} + 1 \mod r_1$ \tcp*[r]{Invariant for the signature modulo $p$\label{alg:cietjoye:c1}}
$c_2 = S' - S_{qr} + 1 \mod r_2$ \tcp*[r]{Invariant for the signature modulo $q$\label{alg:cietjoye:c2}}
$\gamma = (r_3 \cdot c_1 + (2^l - r_3) \cdot c_2) / 2^l$ \label{line:gamma:alg:cietjoye}\tcp*[r]{$\gamma = 1$ if $c_1$ and $c_2$ have value $1$}
\BlankLine
\Return $S = S' - a^\gamma \mod N$ \tcp*[r]{Infection and result retrieval in $\Z_N$}
\end{algorithm*}

\begin{proposition}[Equivalence between test-based and infective countermeasures]
\label{prop:equiv-test-infective}
Each test-based (resp. infective) countermeasure has a direct equivalent infective (resp. test-based) countermeasure.
\end{proposition}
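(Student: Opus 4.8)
The plan is to prove the equivalence constructively, by exhibiting in each direction a mechanical transformation that preserves both correctness (the right signature is returned in the absence of faults) and security (no exploitable value is returned otherwise). First I would distill the common skeleton shared by both kinds of countermeasure: each computes a candidate signature $S$ together with a family of arithmetic invariants, each of the form $U_i \equiv V_i \mod r_i$ in some subring $\Z_{r_i}$, that are supposed to hold exactly when no fault has occurred. A test-based countermeasure (Def.~\ref{def:test-based}) evaluates the booleans $U_i \stackrel{?}{\equiv} V_i \mod r_i$ and branches to \textsf{error} on any violation; an infective countermeasure (Def.~\ref{def:infective}) instead turns each invariant into a value $\tau_i$ that equals the neutral element of some operation precisely when the invariant holds, and folds the $\tau_i$ into $S$. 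The two directions then amount to passing between these two encodings of the same family of invariants, as foreshadowed by the parallel between Alg.~\ref{alg:joye} (line~\ref{alg:joye:verif}) and Alg.~\ref{alg:cietjoye} (lines~\ref{alg:cietjoye:c1}--\ref{alg:cietjoye:c2}).

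For the infective-to-test-based direction, which I expect to be straightforward, I would observe that the infective countermeasure already materializes, for each invariant, the quantity $\tau_i$ that is neutral if and only if the invariant is respected. It therefore suffices to insert, just before the final infection step, the branch ``\textbf{if} some $\tau_i$ differs from the neutral element \textbf{then return} \textsf{error}'': on the surviving branch every $\tau_i$ is neutral, so the infection acts as the identity and returns the correct $S$. This yields a test-based countermeasure checking exactly the same invariants, and by construction it returns \textsf{error} on precisely the fault configurations that the infective version would have randomized.

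For the test-based-to-infective direction I would replace the branch by an infection built from the invariants. The recipe is to rewrite each tested congruence as a deviation $\delta_i \doteq (U_i - V_i) \mod r_i$, which is $0$ exactly when the invariant holds, and then to combine the $\delta_i$ into a single infection term that is the neutral element of the operation used to unmask $S$ iff all $\delta_i$ vanish --- mimicking the $\gamma = 1$ mechanism of Ciet~\&~Joye, where $c_1 = S' - S_{pr} + 1 \mod r_1$ equals $1$ iff the invariant holds and $S = S' - a^\gamma \mod N$ returns the genuine signature only when $\gamma = 1$. Concretely I would carry along a random blinding value (an additive or a multiplicative mask, as the unmasking operation dictates) whose removal is gated by the infection term, so that a correct computation cancels the mask exactly while any violated invariant leaves a random residue in the output.

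The main obstacle is precisely this last construction: one must guarantee not only that the infection is the identity under correctness, but also that under a fault the infected output is genuinely non-exploitable, \ie, that the residual discrepancy $S - \widehat{S}$ is not a multiple of $p$ or of $q$ and hence defeats the BellCoRe extraction of Prop.~\ref{prop:bellcore}. I would handle this by taking the mask to be a fresh random element of $\Z_N$ rather than a low-entropy subring value, so that when the gate fails to cancel it the difference $S - \widehat{S}$ is uniformly random in $\Z_N$ and thus coprime to $N$ with overwhelming probability; combined with the language abuse introduced in Sec.~\ref{classify:family} of treating the negligible collision probability as zero, this delivers the security of the infective version. Finally, since by Lem.~\ref{lem:equiv-fault-code-data} the branch that has been removed could itself only be attacked as a zeroing fault on its condition, the infective version inherits the original's resistance against faults targeting the now-eliminated test, so the two countermeasures are equivalent not only functionally but also with respect to the fault model of Def.~\ref{def:fault}.
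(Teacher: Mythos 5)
Your proposal is correct, and its skeleton --- distilling both kinds of countermeasure down to a common family of modular-equality invariants, then mechanically re-encoding that family in either direction --- is the same as the paper's; the infective-to-test-based direction in particular coincides (the paper turns each invariant $a \stackrel{?}{\equiv} b \mod m$ into the branch \texttt{if a != b [mod m] then return} \textsf{error}). Where you genuinely diverge is the test-based-to-infective construction. The paper introduces no mask and no extra randomness: for each invariant it computes the checksum $c := a - b + 1 \mod m$, which equals $1$ exactly when the invariant holds, multiplies these values into $c^*$, and returns $S^{c^*}$; security is then dispatched in one line, because the BellCoRe extraction $\gcd(N, S-\widehat{S^{c^*}})$ of Prop.~\ref{prop:bellcore} fails whenever $c^* \neq 1$. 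You instead encode each invariant as a zero deviation $\delta_i$ and use the combined gate to control the cancellation of a fresh random mask in $\Z_N$, in the style of Ciet~\&~Joye's $\gamma$. Both mechanisms are sound, but they buy different things. The paper's exponent infection leaves the protected computation untouched (only the post-processing of the returned value changes), needs no additional random value, and is precisely the construction the paper instantiates later (Alg.~\ref{alg:simplified-vigilant} returns $S'^{c_pc_qc_S}$ and Alg.~\ref{alg:infective-aumuller} returns $S^{c_1c_2c_3c_4c_5}$), so it is the minimal, canonical transformation. Your mask-gating is more invasive --- the blinding value must be woven into, and correctly cancelled out of, the computation --- and consumes randomness, but in exchange your security argument is more explicit than the paper's: the paper merely asserts that a non-trivial exponent defeats the gcd computation, whereas you argue that the residual discrepancy is (essentially) uniform in $\Z_N$, hence coprime to $N$ with overwhelming probability, under the same negligible-probability-is-zero convention that both arguments ultimately rely on.
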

\begin{proof}
The invariants that must be verified by countermeasures are modular equality,
so they are of the form $a \stackrel{?}{\equiv} b \mod m$, where $a$, $b$ and $m$ are arithmetic expressions.

It is straightforward to transform this invariant into a Boolean expression usable in test-based countermeasures:
\texttt{if~a~!=~b~[mod~m] then return}~\textit{error}.

To use it in infective countermeasures, it is as easy to verify the same invariant by computing a value which should be $1$ if the invariant holds:
\texttt{c := a - b + 1 mod m}.
The numbers obtained this way for each invariant can then be multiplied and their product $c^*$,
which is $1$ only if all invariants are respected,
can be used as an exponent on the algorithm's result to infect it if one or more of the tested invariants are violated.
Indeed, when the attacker perform the BellCoRe attack by computing $\gcd(N, S - \widehat{S^{c^*}})$ as defined in Prop.~\ref{prop:bellcore}, then if $c^*$ is not $1$ the attack would not work.
\end{proof}

By Prop.~\ref{prop:equiv-test-infective}, we know that there is an equivalence between test-based and infective countermeasures.
This means that in theory any attack working on one kind of countermeasure will be possible on the equivalent countermeasure of the other kind.
However, we remark that in practice it is harder to do a zeroing fault on an intermediate value (especially if it is the result of a computation with big numbers) in the case of an infective countermeasure, than it is to skip one branching instruction in the case of a test-based countermeasure.
We conclude from this the following rule of thumb:
\emph{it is better to use the infective variant of a countermeasure}.
In addition, it is generally the case that code without branches is safer (think of timing attacks or branch predictor attacks on modern CPUs).

Note that if a fault occurs, $c^*$ is not $1$ anymore and thus the computation time required to compute $S^{c^*}$ might significantly increase.
This is not a security problem, indeed, taking longer to return a randomized value in case of an attack is not different from rapidly returning an error constant without finishing the computation first as it is done in the existing test-based countermeasures.
In the worst case scenario, the additional time would be correlated to the induced fault, but we assume the fault to be controlled by the attacker already.

\subsection{Intended Order}
\label{classify:order}

Countermeasures can be classified depending on their order, \ie, the maximum order of the attacks (as per Def.~\ref{def:order}) that they can protect against.

In the literature concerning CRT-RSA countermeasures against fault-injection attacks, most countermeasures claim to be first-order, and a few claim second-order resistance.
For instance, the countermeasures by Aumüller \etal~\cite{DBLP:conf/ches/AumullerBFHS02} and the one by Vigilant~\cite{DBLP:conf/ches/Vigilant08} are described as first-order by their authors,
while Ciet~\&~Joye~\cite{cietjoye:fdtc05} describe a second-order fault model and propose a countermeasure which is supposed to resist to this fault model, and thus be second-order.

However, using the finja\footnote{\url{http://pablo.rauzy.name/sensi/finja.html} (we used the commit \texttt{782384a} version of the code).\label{fn:finja}} tool which has been open-sourced by Rauzy~\&~Guilley~\cite{PR:JCEN13}, we found out that the countermeasure of Ciet~\&~Joye is in fact vulnerable to second-order attacks (in our fault model of Def.~\ref{def:fault}).
This is not very surprising.
Indeed, Prop.~\ref{prop:equiv-test-infective} proves that injecting a fault, and then skipping the invariant verification which was supposed to catch the first fault injection, is a second-order attack strategy which also works for infective countermeasures, except the branching-instruction skip has to be replaced by a zeroing fault.
As expected, the attacks we found using finja did exactly that.
For instance a zeroing fault on $S'_p$ (resp. $S'_q$) makes the computation vulnerable to the BellCoRe attack, and a following zeroing fault on $S_{pr}$ (resp. $S_{qr}$) makes the verification pass anyway.
To our knowledge our attack is new.
It is indeed different from the one Dottax \etal~\cite{DBLP:conf/wistp/DottaxGRS09} found and fixed in their paper, which was an attack on the use of $\gamma$ (see line~\ref{line:gamma:alg:cietjoye} of Alg.~\ref{alg:cietjoye}).
It is true that their attack model only allows skipping faults (as per Def.~\ref{def:fault}) for the second injection, but we have concerns about this:
\begin{compactitem}
\item What justifies this limitation on the second fault? Surely if the attackers are able to inject two faults and can inject a zeroing fault once they can do it twice.
\item Even considering their attack model, a zeroing fault on an intermediate variable $x$ can in many cases be obtained by skipping the instructions where the writing to $x$ happens.
\item The fixed version of the countermeasure by Dottax \etal~\cite[Alg.~8, p.~13]{DBLP:conf/wistp/DottaxGRS09} makes it even closer to the one of Joye \etal by removing the use of $a$ and $\gamma$.
  It also removes the result infection part and instead returns $S$ along with values that should be equal if no faults were injected, leaving ``out'' of the algorithm the necessary comparison and branching instructions which are presented in a separate procedure~\cite[Proc.~1, p.~11]{DBLP:conf/wistp/DottaxGRS09}.
  The resulting countermeasure is second-order resistant (in their attack model) only because the separate procedure does the necessary tests twice (it would indeed break at third-order unless an additional repetition of the test is added, \etc).
\end{compactitem}

An additional remark would be that the algorithms of intended second-order countermeasures does not look very different from others.
Moreover, Rauzy~\&~Guilley~\cite{PR:JCEN13,PR:PPREW14} exposed evidence that the intendedly first-order countermeasures of Aumüller \etal and Vigilant actually offer the same level of resistance against second-order attacks, \ie, they resist when the second injected fault is a randomizing fault (or a skipping fault which amounts to a randomizing fault).

\subsection{Usage of the Small Rings}
\label{classify:smallrings}

In most countermeasures, the computation of the two intermediate signatures modulo $p$ and modulo $q$ of the CRT actually takes place in overrings.
The computation of $S_p$ (resp. $S_q$) is done in $\Z_{pr_1}$ (resp. $\Z_{qr_2}$) for some small random number $r_1$ (resp. $r_2$) rather than in $\Z_p$ (resp. $\Z_q$).
This allows the retrieval of the results by reducing modulo $p$ (resp. $q$) and verifying the signature modulo $r_1$ (resp. $r_2$),
or, if it is done after the CRT recombination, the results can be retrieved by reducing modulo $N = p \cdot q$.
The reduction in the \emph{small subrings} $\Z_{r_1}$ and $\Z_{r_2}$ is used as the checksums for verifying the integrity of the computation.
It works because small random numbers are necessarily coprime with a big prime number.

An interesting part of countermeasures is how they use the small subrings to verify the integrity of the computations.
Almost all the various countermeasures we studied had different ways of using them.
However, they can be divided in two groups.
On one side there are countermeasures which use the small subrings to verify the integrity of the intermediate CRT signatures and of the recombination directly but using smaller numbers, like Blömer \etal's countermeasure~\cite{DBLP:conf/ccs/BlomerOS03}, or Ciet~\&~Joye's one~\cite{cietjoye:fdtc05}.
On the other side, there are countermeasures which use some additional arithmetic properties to verify the necessary invariants indirectly in the small subrings.
Contrary to the countermeasures in the first group, the ones in the second group use the same value $r$ for $r_1$ and $r_2$.
The symmetry obtained with $r_1 = r_2$ is what makes the additional arithmetic properties hold, as we will see.

\subsubsection{Verification of the Intermediate CRT Signatures}
The countermeasure of Blömer \etal~\cite{DBLP:conf/ccs/BlomerOS03} uses the small subrings to verify the intermediate CRT signatures.
It is exposed in Alg.~\ref{alg:bos}.
This countermeasure needs access to $d$ directly rather than $d_p$ and $d_q$ as the standard interface for CRT-RSA suggests, in order to compute $d'_p = d \mod \varphi(p \cdot r_1)$ and $d'_q = d \mod \varphi(q \cdot r_2)$, as well as their inverse $e'_p = {d'_p}^{-1} \mod \varphi(p \cdot r_1)$ and $e'_q = {d'_q}^{-1} \mod \varphi(q \cdot r_2)$ to verify the intermediate CRT signatures.

We can see in Alg.~\ref{alg:bos} that these verifications (lines~\ref{alg:bos:c1} and~\ref{alg:bos:c2}) happen after the recombination (line~\ref{alg:bos:S'}) and retrieve the checksums in $\Z_{r_1}$ (for the $p$ part of the CRT) and $\Z_{r_2}$ (for the $q$ part) from the recombined value $S'$.
It allows these tests to verify the integrity of the recombination at the same time as they verify the integrity of the intermediate CRT signatures.

\begin{algorithm*}
\algorithmConfig
\caption{CRT-RSA with Blömer \etal's countermeasure~\cite{DBLP:conf/ccs/BlomerOS03}}
\label{alg:bos}
\Input{Message $M$, key $(p, q, d, i_q)$}
\Output{Signature $M^d \mod N$, or a random value in $\Z_N$}
\BlankLine
Choose two small random integers $r_1$ and $r_2$. \;
Store in memory $p' = p \cdot r_1$, $q' = q \cdot r_2$, $i'_q = q'^{-1} \mod p'$, $N = p \cdot q$, $N' = N \cdot r_1 \cdot r_2$, $d'_p$, $d'_q$, $e'_p$, $e'_q$. \;
\BlankLine
$S'_p = M^{d'_p} \mod p'$ \tcp*[r]{Intermediate signature in $\Z_{pr_1}$}
$S'_q = M^{d'_q} \mod q'$ \tcp*[r]{Intermediate signature in $\Z_{qr_2}$}
\BlankLine
$S' = S'_q + q' \cdot (i'_q \cdot (S'_p - S'_q) \mod p')$ \tcp*[r]{Recombination in $\Z_{Nr_1r_2}$\label{alg:bos:S'}}
\BlankLine
$c_1 = M - S'^{e'_p} + 1 \mod r_1$ \tcp*[r]{Invariant for the signature modulo $p$\label{alg:bos:c1}}
$c_2 = M - S'^{e'_q} + 1 \mod r_2$ \tcp*[r]{Invariant for the signature modulo $q$\label{alg:bos:c2}}
\BlankLine
\Return $S = S'^{c_1c_2} \mod N$ \tcp*[r]{Infection and result retrieval in $\Z_N$}
\end{algorithm*}

\subsubsection{Checksums of the Intermediate CRT Signatures}
The countermeasure of Ciet~\&~Joye~\cite{cietjoye:fdtc05} uses the small subrings to compute checksums of the intermediate CRT signatures.
It is exposed in Alg.~\ref{alg:cietjoye}.
Just as the previous one, the verifications (lines~\ref{alg:cietjoye:c1} and~\ref{alg:cietjoye:c2}) take place after the recombination (line~\ref{alg:cietjoye:S'}) and retrieve the checksums in $\Z_{r_1}$ (for the $p$ part of the CRT) and $\Z_{r_2}$ (for the $q$ part) from the recombined value $S'$, which enables the integrity verification of the recombination at the same time as the integrity verifications of the intermediate CRT signatures.

We note that this is missing from the protection of Joye \etal~\cite{joye01:crt-rsa}, presented in Alg.~\ref{alg:joye}, which does not verify the integrity of the recombination at all and is thus as broken as Shamir's countermeasure~\cite{shamir-patent-rsa-crt}.
The countermeasure of Ciet~\&~Joye is a clever fix against the possible fault attacks on the recombination of Joye \etal's countermeasure, which also uses the transformation that we described in Prop.~\ref{prop:equiv-test-infective} from a test-based to an infective countermeasure.

\subsubsection{Overrings for CRT Recombination}
In Ciet~\&~Joye's countermeasure the CRT recombination happens in an overring $\Z_{Nr_1r_2}$ of $\Z_N$ while Joye \etal's countermeasure extracts in $\Z_p$ and $\Z_q$ the results $S_p$ and $S_q$ of the intermediate CRT signatures to do the recombination in $\Z_N$ directly.

There are only two other countermeasures which do the recombination in $\Z_N$ that we know of: the one of Shamir~\cite{shamir-patent-rsa-crt} and the one of Aumüller \etal~\cite{DBLP:conf/ches/AumullerBFHS02}.
The first one is known to be broken, in particular because it does not check whether the recombination has been faulted at all.
The second one seems to need to verify $5$ invariants to resist the BellCoRe attack%
\footnote{The original Aumüller \etal's countermeasure uses $7$ verifications because it also needs to check the integrity of intermediate values introduced against simple power analysis, see~\cite[Remark 1]{PR:JCEN13}.},
which is more than the only $2$ required by the countermeasure of Ciet~\&~Joye~\cite{cietjoye:fdtc05} or by the one of Blömer \etal~\cite{DBLP:conf/ccs/BlomerOS03}, while offering a similar level of protection (see \cite{PR:JCEN13}).
This fact led us to think that the additional tests are necessary because the recombination takes place ``in the clear''.
But we did not jump right away to that conclusion.
Indeed, Vigilant's countermeasure~\cite{DBLP:conf/ches/Vigilant08} does the CRT recombination in the $\Z_{Nr^2}$ overring of $\Z_N$ and seems to require $7$ verifications%
\footnote{Vigilant's original countermeasure and its corrected version by Coron \etal~\cite{DBLP:conf/fdtc/CoronGMPV10} actually use $9$ verifications but were simplified by Rauzy~\&~Guilley~\cite{PR:PPREW14} who removed $2$ verifications.}
to also offer that same level of security (see \cite{PR:PPREW14}).
However, we remark that Shamir's, Aumüller \etal's, and Vigilant's countermeasures use the same value for $r_1$ and $r_2$.

\subsubsection{Identity of  $r_1$ and $r_2$}
Some countermeasures, such as the ones of Shamir~\cite{shamir-patent-rsa-crt}, Aumüller \etal~\cite{DBLP:conf/ches/AumullerBFHS02}, and Vigilant~\cite{DBLP:conf/ches/Vigilant08} use a single random number $r$ to construct the overrings used for the two intermediate CRT signatures computation.
The resulting symmetry allows these countermeasures to take advantage of some additional arithmetic properties.

\paragraph{Shamir's countermeasure}
In his countermeasure, which is presented in Alg.~\ref{alg:shamir}, Shamir uses a clever invariant property to verify the integrity of both intermediate CRT signatures in a single verification step (line~\ref{alg:shamir:verif}).
This is made possible by the fact that he uses $d$ directly instead of $d_p$ and $d_q$, and thus the checksums in $\Z_r$ of both the intermediate CRT signatures are supposed to be equal if no fault occurred.
Unfortunately, the integrity of the recombination is not verified at all.
We will see in Sec.~\ref{building:correction} how to fix this omission.
Besides, we notice that $d$ can be reconstructed from a usual CRT-RSA key $(p, q, d_p, d_q, i_q)$;
we refer the reader to Appendix~\ref{sec-e_d}.

\begin{algorithm*}
\algorithmConfig
\caption{CRT-RSA with Shamir's countermeasure~\cite{shamir-patent-rsa-crt}}
\label{alg:shamir}
\Input{Message $M$, key $(p, q, d, i_q)$}
\Output{Signature $M^d \mod N$, or \textsf{error}}
\BlankLine
Choose a small random integer $r$. \;
\BlankLine
$p' = p \cdot r$ \;
$S'_p = M^{d \mod \varphi(p')} \mod p'$ \tcp*[r]{Intermediate signature in $\Z_{pr}$\label{alg:shamir:S'p}}
\BlankLine
$q' = q \cdot r$ \;
$S'_q = M^{d \mod \varphi(q')} \mod q'$ \tcp*[r]{Intermediate signature in $\Z_{qr}$\label{alg:shamir:S'q}}
\BlankLine
$S_p = S'_p \mod p$ \tcp*[r]{Retrieve intermediate signature in $\Z_p$}
$S_q = S'_q \mod q$ \tcp*[r]{Retrieve intermediate signature in $\Z_q$}
\BlankLine
$S = S_q + q \cdot (i_q \cdot (S_p - S_q) \mod p)$ \tcp*[r]{Recombination in $\Z_N$}
\BlankLine
\lIf{$S'_p \not\equiv S'_q \mod r$ \label{alg:shamir:verif}}{
  \Return\textsf{error}
}
\BlankLine
\Return $S$
\end{algorithm*}

\paragraph{Aumüller \etal's countermeasure}
Contrary to Shamir, Aumüller \etal do verify the integrity of the recombination in their countermeasure, which is presented in Alg.~\ref{alg:aumuller}.
To do this, they straightforwardly check (line~\ref{alg:aumuller:verifS}) that when reducing the result $S$ of the recombination modulo $p$ (resp. $q$), the obtained value corresponds to the intermediate signature in $\Z_p$ (resp. $\Z_q$).
However, they do not use $d$ directly but rather conform to the standard CRT-RSA interface by using $d_p$ and $d_q$.
Thus, they need another verification to check the integrity of the intermediate CRT signatures.
Their clever strategy is to verify that the checksums of $S_p$ and $S_q$ in $\Z_r$ are conform to each other (lines~\ref{alg:aumuller:c_begin} to~\ref{alg:aumuller:c_end}).
For that they check whether ${S_{p}}^{d_{q}}$ is equal to ${S_{q}}^{d_{p}}$ in $\Z_r$, that is, whether the invariant $(M^{d_p})^{d_q} \equiv (M^{d_q})^{d_p} \mod r$ holds.

\begin{algorithm*}
\algorithmConfig
\caption{CRT-RSA with Aumüller \etal's countermeasure\protect\footnotemark~\cite{DBLP:conf/ches/AumullerBFHS02}}
\label{alg:aumuller}
\Input{Message $M$, key $(p, q, d_p, d_q, i_q)$}
\Output{Signature $M^d \mod N$, or \textsf{error}}
\BlankLine
Choose a small random integer $r$. \;
\BlankLine
$p' = p \cdot r$ \;
$q' = q \cdot r$ \;
\lIf{$p' \not\equiv 0 \mod p$ \Or $q' \not\equiv 0 \mod q$ \label{alg:aumuller:verifp'q'}}{
  \Return\textsf{error}
}
\BlankLine
$S'_p = M^{d_p \mod \varphi(p')} \mod p'$ \tcp*[r]{Intermediate signature in $\Z_{pr}$}
$S'_q = M^{d_q \mod \varphi(q')} \mod q'$ \tcp*[r]{Intermediate signature in $\Z_{qr}$}
\BlankLine
$S_p = S'_p \mod p$ \tcp*[r]{Retrieve intermediate signature in $\Z_p$}
$S_q = S'_q \mod q$ \tcp*[r]{Retrieve intermediate signature in $\Z_q$}
\BlankLine
$S = S_q + q \cdot (i_q \cdot (S_p - S_q) \mod p)$ \tcp*[r]{Recombination in $\Z_N$}
\lIf{$S \not\equiv S'_p \mod p$ \Or $S \not\equiv S'_q \mod q$ \label{alg:aumuller:verifS}}{
  \Return\textsf{error}
}
\BlankLine
$S_{pr} = S'_p \mod r$ \tcp*[r]{Checksum of $S_p$ in $\Z_{r}$\label{alg:aumuller:c_begin}}
$S_{qr} = S'_q \mod r$ \tcp*[r]{Checksum of $S_q$ in $\Z_{r}$}
\lIf{${S_{pr}}^{d_q \mod \varphi(r)} \not\equiv {S_{qr}}^{d_p \mod \varphi(r)} \mod r$}{
  \Return\textsf{error} \label{alg:aumuller:c_end}
}
\BlankLine
\Return $S$
\end{algorithm*}

The two additional tests on line~\ref{alg:aumuller:verifp'q'} verify the integrity of $p'$ and $q'$.
Indeed, if $p$ or $q$ happen to be randomized when computing $p'$ or $q'$ the invariant verifications in $\Z_r$ would pass but the retrieval of the intermediate signatures in $\Z_p$ or $\Z_q$ would return random values, which would make the BellCoRe attack work.
These important verifications are missing from all the previous countermeasures in Shamir's family.

\paragraph{Vigilant's countermeasure}
Vigilant takes another approach.
Rather than doing the integrity verifications on ``direct checksums'' that are the representative values of the CRT-RSA computation in the small subrings,
Vigilant uses different values that he constructs for that purpose.
The clever idea of his countermeasure is to use sub-CRTs on the values that the CRT-RSA algorithm manipulates in order to have in one part the value we are interested in and in the other the value constructed for the verification (lines~\ref{alg:vigilant:crtM'p} and~\ref{alg:vigilant:crtM'q}).

\begin{algorithm*}
\algorithmConfig
\caption{CRT-RSA with Vigilant's countermeasure\footref{fn:no-spa}~\cite{DBLP:conf/ches/Vigilant08}\protect\\ {\smaller with Coron \etal's fixes~\cite{DBLP:conf/fdtc/CoronGMPV10} and Rauzy~\&~Guilley's simplifications~\cite{PR:PPREW14}}}
\label{alg:vigilant}
\Input{Message $M$, key $(p, q, d_p, d_q, i_q)$}
\Output{Signature $M^d \mod N$, or \textsf{error}}
\BlankLine
Choose small random integers $r$, $R_1$, and $R_2$. \;
$N = p \cdot q$ \label{alg:vigilant:N_eq_pq} \;
\BlankLine
$p' = p \cdot r^2$ \;
$i_{pr} = p^{-1} \mod r^2$ \;
$M_p = M \mod p'$ \;
$B_p = p \cdot i_{pr}$ \;
$A_p = 1 - B_p \mod p'$ \;
$M'_p = A_p \cdot M_p + B_p \cdot (1 + r) \mod p'$ \tcp*[r]{CRT insertion of verification value in $M'_p$\label{alg:vigilant:crtM'p}}
\BlankLine
$S'_p = {M'_p}^{d_p \mod \varphi(p')} \mod p'$ \tcp*[r]{Intermediate signature in $\Z_{pr^2}$}
\BlankLine
\lIf{$M'_p \not\equiv M \mod p$ \label{alg:vigilant:verifM'p}}{
  \Return\textsf{error}
}
\lIf{$B_p \cdot S'_p \not\equiv B_p \cdot (1 + d_p \cdot r) \mod p'$ \label{alg:vigilant:verifSpr}}{
  \Return\textsf{error}
}
\BlankLine
$q' = q \cdot r^2$ \;
$i_{qr} = q^{-1} \mod r^2$ \;
$M_q = M \mod q'$ \;
$B_q = q \cdot i_{qr}$ \;
$A_q = 1 - B_q \mod q'$ \;
$M'_q = A_q \cdot M_q + B_q \cdot (1 + r) \mod q'$ \tcp*[r]{CRT insertion of verification value in $M'_q$\label{alg:vigilant:crtM'q}}
\BlankLine
$S'_q = {M'_q}^{d_q \mod \varphi(q')} \mod q'$ \tcp*[r]{Intermediate signature in $\Z_{qr^2}$}
\BlankLine
\lIf{$M'_q \not\equiv M \mod q$ \label{alg:vigilant:verifM'q}}{
  \Return\textsf{error}
}
\lIf{$B_q \cdot S'_q \not\equiv B_q \cdot (1 + d_q \cdot r) \mod q'$ \label{alg:vigilant:verifSqr}}{
  \Return\textsf{error}
}
\BlankLine
$S_{pr} = S'_p - B_p \cdot (1 + d_p \cdot r - R_1)$ \tcp*[r]{Verification value of $S'_p$ swapped with $R_1$\label{alg:vigilant:R1}}
$S_{qr} = S'_q - B_q \cdot (1 + d_q \cdot r - R_2)$ \tcp*[r]{Verification value of $S'_q$ swapped with $R_2$\label{alg:vigilant:R2}}
\BlankLine
$S_r = S_{qr} + q \cdot (i_q \cdot (S_{pr} - S_{qr}) \mod p')$ \tcp*[r]{Recombination in $\Z_{Nr^2}$\label{alg:vigilant:LastRecombination}}
\BlankLine
\tcp{Simultaneous verification of lines~\ref{alg:vigilant:N_eq_pq} and \ref{alg:vigilant:LastRecombination}}
\lIf{$pq \cdot (S_r - R_2 - q \cdot i_q \cdot (R_1 - R_2)) \not\equiv 0 \mod N r^2$ \label{alg:vigilant:verifS}}{
  \Return\textsf{error}
}
\BlankLine
\Return $S = S_r \mod N$ \tcp*[r]{Retrieve result in $\Z_N$}
\end{algorithm*}

To do this, he transforms $M$ into another value $M'$ such that:
\[ M' \equiv
\begin{cases}
  M \mod N,\\
  1+r \mod r^2,
\end{cases}
\]
which implies that:
\[ S' = M'^{d} \mod Nr^2 \equiv
\begin{cases}
  M^d \mod N,\\
  1+dr \mod r^2.
\end{cases}
\]
The latter results are based on the binomial theorem, which states that
$ (1+r)^d = \sum_{k=0}^d {d \choose k}r^k = 1+dr + {d \choose 2} r^2 + \text{…}$,
which simplifies to $1+dr$ in the $\Z_{r^2}$ ring.

This property is used to verify the integrity of the intermediate CRT signatures on lines~\ref{alg:vigilant:verifSpr} and~\ref{alg:vigilant:verifSqr}.
It is also used on line~\ref{alg:vigilant:verifS} which tests the recombination using the same technique but with random values inserted on lines~\ref{alg:vigilant:R1} and~\ref{alg:vigilant:R2} in place of the constructed ones.
This test also verifies the integrity of $N$.

Two additional tests are required by Vigilant's arithmetic trick.
The verifications at lines~\ref{alg:vigilant:verifM'p} and~\ref{alg:vigilant:verifM'q} ensure that the original message $M$ has indeed been CRT-embedded in $M'_p$ and $M'_q$.

\footnotetext{For the sake of simplicity we removed some code that served against SPA (simple power analysis) and only kept the necessary code against fault-injection attacks. \label{fn:no-spa}}

\section{The Essence of a Countermeasure}
\label{essence}

Our attempt to classify the existing countermeasures provided us with a deep understanding of how they work.
To ensure the integrity of the CRT-RSA computation, the algorithm must verify $3$ things: the integrity of the computation modulo $p$, the integrity of the computation modulo $q$, and the integrity of the CRT recombination (which can be subject to transient fault attacks).
This fact has been known since the first attacks on Shamir's countermeasure.
Our study of the existing countermeasures revealed that, as expected, those which perform these three integrity verifications are the ones which actually work.
This applies to Shamir's family of countermeasures, but also for Giraud's family.
Indeed, countermeasures in the latter also verify the two exponentiations and the recombination by testing the consistency of the exponentiations indirectly on the recombined value.

\subsection{A Straightforward Countermeasure}
\label{essence:straightforward}

The result of these observations is a very straightforward countermeasure, presented in Alg.~\ref{alg:straightforward}.
This countermeasure works by testing the integrity of the signatures modulo $p$ and $q$ by replicating the computations (lines~\ref{alg:straightforward:Sp} and~\ref{alg:straightforward:Sq}) and comparing the results, and the integrity of the recombination by verifying that the two parts of the CRT can be retrieved from the final result (line~\ref{alg:straightforward:S}).
This countermeasure is of course very expensive since the two big exponentiations are done twice, and is thus not usable in practice.
Note that it is nonetheless still better in terms of speed than computing RSA without the CRT optimization.

\begin{algorithm*}
\algorithmConfig
\caption{CRT-RSA with straightforward countermeasure}
\label{alg:straightforward}
\Input{Message $M$, key $(p, q, d_p, d_q, i_q)$}
\Output{Signature $M^d \mod N$, or \textsf{error}}
\BlankLine
$S_p = M^{d_p \mod \varphi(p)} \mod p$ \tcp*[r]{Intermediate signature in $\Z_p$\label{alg:straightforward:Sp}}
\lIf{$S_p \not\equiv M^{d_p} \mod p$ \label{alg:straightforward:verifSp}}{
  \Return \textsf{error}
}
\BlankLine
$S_q = M^{d_q \mod \varphi(q)} \mod q$ \tcp*[r]{Intermediate signature in $\Z_q$\label{alg:straightforward:Sq}}
\lIf{$S_q \not\equiv M^{d_q} \mod q$ \label{alg:straightforward:verifSq}}{
  \Return \textsf{error}
}
\BlankLine
$S = S_q + q \cdot (i_q \cdot (S_p - S_q) \mod p)$ \tcp*[r]{Recombination in $\Z_N$\label{alg:straightforward:S}}
\lIf{$S \not\equiv S_p \mod p$ \Or $S \not\equiv S_q \mod q$ \label{alg:straightforward:verifS}}{
  \Return \textsf{error}
}
\BlankLine
\Return $S$
\end{algorithm*}

\begin{proposition}[Correctness]
\label{prop:correctness}
The straightforward countermeasure (and thus all the ones which do equivalent verifications) is secure against first-order fault attacks as per Def.~\ref{def:fault} and~\ref{def:order}.
\end{proposition}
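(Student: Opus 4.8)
The plan is to reduce the statement to a finite case analysis over the single admitted fault, and to show that in every case Alg.~\ref{alg:straightforward} either returns \textsf{error} or returns a value on which the BellCoRe attack of Prop.~\ref{prop:bellcore} cannot succeed. First I would strip the fault model down: by Lem.~\ref{lem:equiv-fault-code-data} it suffices to treat a single randomizing or zeroing fault on the data (the order being exactly $1$ by Def.~\ref{def:order}, the fault types being those of Def.~\ref{def:fault}), so skipping faults need no separate treatment.

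Next I would recast what it means for the attack to fail. Reading off the mechanism in the proof of Prop.~\ref{prop:bellcore}, a returned value $\widehat{S}$ is exploitable \emph{only} when $S-\widehat{S}$ is a nonzero multiple of exactly one of $p$ and $q$, \ie, when $\widehat{S}$ agrees with the correct signature $S$ modulo exactly one of the two primes. Indeed, if $\widehat{S}\equiv S$ modulo both primes then $\widehat{S}=S$ and nothing leaks, and if $\widehat{S}$ agrees with $S$ modulo neither prime then $\gcd(N,S-\widehat{S})=1$ and again nothing leaks. Hence the whole proposition reduces to one target: no single fault may make the returned value correct modulo exactly one prime while passing all three tests.

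The core of the argument is then a partition of the fault location. A fault that corrupts the intermediate signature $S_p$ (line~\ref{alg:straightforward:Sp}) is caught by the replicated check on line~\ref{alg:straightforward:verifSp}, because the unique fault has been spent on $S_p$ and the recomputation of $M^{d_p}\bmod p$ is therefore fault-free; symmetrically for $S_q$ and line~\ref{alg:straightforward:verifSq}. A fault that corrupts the recombination (line~\ref{alg:straightforward:S}) leaves $S_p$ and $S_q$ intact, so the CRT-consistency test on line~\ref{alg:straightforward:verifS}, which checks $S\equiv S_p\bmod p$ and $S\equiv S_q\bmod q$, detects it. There remain two families that produce no error and must be shown benign: a fault placed on the returned value \emph{after} line~\ref{alg:straightforward:verifS} is uncontrolled (random or zero), and under the poor spatial locality of Def.~\ref{def:fault} it cannot reproduce the single-prime congruence the attack needs (for a zeroing fault $S-\widehat{S}=S$ with $\gcd(N,S)=1$, and for a randomizing fault $S-\widehat{S}$ is essentially uniform, so $\gcd(N,S-\widehat{S})=1$ up to the negligible probability we treat as zero); and a fault on the shared message $M$ propagates consistently through both CRT branches, yielding a genuine signature $\widehat{M}^{\,d}\bmod N$ of a modified message, so that $S-\widehat{S}$ is again essentially random and no factor is exposed.

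The hard part will not be any individual case but the completeness and bookkeeping of this enumeration: I must verify that a single fault can \emph{never} simultaneously corrupt a quantity and disable the very test meant to catch it. The delicate cases are precisely the values shared between the nominal computation and its verification (the exponents and moduli that appear both in a computation line and in the corresponding check) and the faults injected ``late'', after the last test; these are exactly the places where historically such countermeasures have hidden their flaws. I would therefore buttress the hand enumeration with an exhaustive machine check using the \texttt{finja} tool, to confirm that the families above cover every intermediate value of Alg.~\ref{alg:straightforward} and that each one is either detected by a test or renders $\gcd(N,S-\widehat{S})\in\{1,N\}$, which is the required conclusion.
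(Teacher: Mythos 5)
Your proof is correct, and its engine is the same as the paper's: reduce code faults to data faults via Lem.~\ref{lem:equiv-fault-code-data}, then case-split on the location of the single fault, exploiting the fact that a one-fault budget cannot both corrupt a value and disable the test dedicated to catching that corruption (a fault on $S_p$ or $S_q$ is caught by the replicated exponentiation, a fault on the recombination by the test of line~\ref{alg:straightforward:verifS}, and a fault on a test itself leaves the data intact, so the skipped test was unnecessary). The genuine difference is your organizing principle. The paper's proof is purely structural: it argues that every data corruption is detected, implicitly equating security with ``correct result or \textsf{error}''. You instead reformulate security through Prop.~\ref{prop:bellcore} as the requirement $\gcd(N, S-\widehat{S})\in\{1,N\}$, which admits a second benign outcome: undetected faults whose output is unexploitable. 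That framing is what lets you treat cases the paper's proof passes over in silence --- a permanent fault on $M$, which passes all three tests yet returns a well-formed signature of the wrong message, and a fault on $S$ after line~\ref{alg:straightforward:verifS}, which no test can ever observe; neither fits the paper's dichotomy, but both are immediate under your gcd criterion. The price is a heavier enumeration burden, which you rightly flag as the real difficulty (values shared between computation and verification, late faults); your fallback on machine-checking with finja is consistent with the paper's own practice, since the authors rely on that tool (Sec.~\ref{essence:order}) rather than on the written proof to check exhaustiveness.
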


\begin{proof}
The proof is in two steps.
First, prove that if the intermediate signatures are not correct, then the tests at lines~\ref{alg:straightforward:verifSp} and~\ref{alg:straightforward:verifSq} returns {\smaller \textsf{error}}.
Second, prove that if both tests passed then either the recombination is correct or the test at line~\ref{alg:straightforward:verifS} returns {\smaller \textsf{error}}.

If a fault occurs during the computation of $S_p$ (line~\ref{alg:straightforward:Sp}), then it either has the effect of zeroing its value or randomizing it, as shown by Lem.~\ref{lem:equiv-fault-code-data}.
Thus, the test of line~\ref{alg:straightforward:verifSp} detects it since the two compared values won't be equal.
If the fault happens on line~\ref{alg:straightforward:verifSp}, then either we are in a symmetrical case: the repeated computation is faulted, or the test is skipped: in that case there are no faults affecting the data so the test is unnecessary anyway.
It works similarly for the intermediate signature in $\Z_q$.

If the first two tests pass, then the tests at line~\ref{alg:straightforward:verifS} verify that both parts of the CRT computation are indeed correctly recombined in $S$.
If a fault occurs during the recombination on line~\ref{alg:straightforward:S} it will thus be detected.
If the fault happens at line~\ref{alg:straightforward:verifS}, then either it is a fault on the data and one of the two tests returns {\smaller \textsf{error}}, or it is a skipping fault which bypasses one or both tests but in that case there are no faults affecting the data so the tests are unnecessary anyway.
\end{proof}

\subsection{High-Order Countermeasures}
\label{essence:order}

Using the finja\footref{fn:finja} tool we were able to verify that removing one of the three integrity checks indeed breaks the countermeasure against first-order attacks.
Nonetheless, each countermeasure which has these three integrity checks, plus those that may be necessary to protect optimizations on them, offers the same level of protection.

\begin{proposition}[High-order countermeasures]
\label{prop:high-order}
Against randomizing faults, all correct countermeasures (as per Prop.~\ref{prop:correctness}) are high-order.
However, there are no generic high-order countermeasures if the three types of faults in our attack model are taken into account, but it is possible to build $n$th-order countermeasures for any $n$.
\end{proposition}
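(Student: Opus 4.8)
The plan is to treat the three assertions of the proposition separately, since they rest on different mechanisms: the first is a \emph{positive} generic result, the second a \emph{negative} (impossibility) result, and the third a \emph{constructive} one. Throughout I would reuse Prop.~\ref{prop:correctness} (which already isolates the three integrity checks and shows each catches a single faulted value), the reduction of Lem.~\ref{lem:equiv-fault-code-data}, and the language abuse that a fresh random value coincides with a prescribed one only with probability $0$.

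For the first claim (any number of \emph{randomizing} faults), the key observation is that a randomizing fault can \emph{corrupt} a value but can never \emph{suppress} a check. Indeed, to make a test of the form $a \stackrel{?}{\equiv} b \bmod m$ pass on corrupted data, the attacker would have to either force two independent values to coincide --- impossible up to the probability-$0$ abuse --- or drive the Boolean outcome of the comparison to ``no error'', which by Lem.~\ref{lem:equiv-fault-code-data} requires a \emph{zeroing} fault, not a randomizing one. Hence, whatever the number of randomizing faults, any corruption of $S_p$, $S_q$, or of the recombination that would enable the BellCoRe attack still triggers at least one of the three surviving checks of Prop.~\ref{prop:correctness}, and the algorithm returns {\smaller \textsf{error}}. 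This shows every correct countermeasure is high-order against randomizing faults.

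For the impossibility of a \emph{generic} high-order countermeasure under the full model of Def.~\ref{def:fault}, I would argue by finiteness. Any concrete countermeasure is a fixed program containing a fixed, finite number $g$ of guards (branch tests, or, in the infective form of Prop.~\ref{prop:equiv-test-infective}, correction factors multiplied into the result). An attacker may spend one fault to corrupt $S_p$ (a zeroing fault enabling BellCoRe) and then one fault per guard to neutralize it --- skipping the branch, zeroing its condition, or skipping the instruction that applies the corresponding correction factor to the result. With $1+g$ faults the corrupted, exploitable value is therefore returned, so the countermeasure is broken at order $1+g$. Since $g$ is finite and fixed for the given program, no single countermeasure can resist attacks of every order.

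The same counting, run in reverse, yields the third claim and pinpoints the difficulty. Starting from a correct first-order countermeasure (Prop.~\ref{prop:correctness}) I would replicate each of its three integrity checks $n$ times. Any successful attack must spend at least one fault to create an exploitable corruption and must neutralize \emph{every} copy of at least one guarding check; if the copies are mutually independent this costs at least $n$ further faults, so breaking the countermeasure requires $n+1>n$ injections, establishing $n$th-order security for the model of Def.~\ref{def:fault} and~\ref{def:order}. The hard part --- and the reason the redundancy must be designed rather than merely pasted --- is guaranteeing that \emph{no single fault neutralizes two copies at once}: because a skipping fault may erase an arbitrary run of \emph{consecutive} instructions (Def.~\ref{def:fault}), stacking the duplicated checks back-to-back would let one skip defeat all of them. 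I would circumvent this by interleaving the copies with the essential computation (or with one another) so that no single skip spans two, preferring the infective variant of Prop.~\ref{prop:equiv-test-infective} in which neutralizing a copy forces the attacker to remove its correction factor individually, and by checking that the recomputed reference values and the factors themselves are each guarded. This bookkeeping, rather than any deep arithmetic, is where the proof must be carried out with care.
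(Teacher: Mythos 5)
Your proof is correct and follows the same three-part decomposition and the same mechanisms as the paper's own proof: a randomizing fault can corrupt data but can never suppress a verification (so any exploitable corruption is still caught), any fixed countermeasure has finitely many guards each of which can be neutralized by one zeroing or skipping fault (so no generic high-order countermeasure exists), and replicating each check $n$ times forces the attacker to spend $n+1$ faults (so $n$th-order countermeasures exist). Two points where you go beyond the paper are worth noting. First, the paper states the impossibility claim loosely (``if after one or more faults which permit an attack, there is a skipping fault or a zeroing fault which leads to skip the verification\dots then the attack will work''); your explicit counting argument --- a fixed program has a finite number $g$ of guards, hence is broken at order $1+g$ --- makes the impossibility precise rather than illustrative. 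Second, and more substantively, you observe that since a single skipping fault may erase any number of \emph{consecutive} instructions (Def.~\ref{def:fault}), pasting the $n$ copies of a verification back-to-back would let one skip defeat all of them; the paper's prescription ``replicate the invariant verifications $n$ times'' is silent on this, so your requirement that the copies be interleaved with the computation (or realized as independently guarded infective factors) is not mere bookkeeping but a condition genuinely needed for the construction to achieve order $n$ in the paper's own fault model. In that respect your write-up is the more careful of the two.
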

\begin{proof}
Indeed, if a countermeasure is able to detect a single randomizing fault, then adding more faults will not break the countermeasure, since a random fault cannot induce a verification skip.
Thus, \emph{all working countermeasures are high-order against randomizing faults}.

However, if after one or more faults which permit an attack, there is a skipping fault or a zeroing fault which leads to skip the verification which would detect the previous fault injections, then the attack will work.
As Lem.~\ref{lem:equiv-fault-code-data} and Prop.~\ref{prop:equiv-test-infective} explain, this is true for all countermeasures, not only those which are test-based but also the infective ones.
It seems that the only way to protect against that is to replicate of the integrity checks.
If each invariant is verified $n$ times, then the countermeasure will resist at least $n$ faults in the worst case scenario: a single fault is used to break the computation and the $n$ others to avoid the verifications which detect the effect of the first fault.
Thus, \emph{there are no generic high-order countermeasures} if the three types of faults in our attack model are taken into account, \textbf{\emph{but it is possible to build a $n\mathrm{th}$-order countermeasure for any $n$} by replicating the invariant verifications $n$ times}.
\end{proof}

Existing first-order countermeasures such as the ones of Aumüller \etal~(Alg.~\ref{alg:aumuller},~\ref{alg:infective-aumuller}), Vigilant~(Alg.~\ref{alg:vigilant},~\ref{alg:simplified-vigilant}), or Ciet~\&~Joye~(Alg.~\ref{alg:cietjoye}) can thus be transformed into $n\mathrm{th}$-order countermeasures, in the attack model described in Def.~\ref{def:fault} and~\ref{def:order}.
As explained, the transformation consists in replicating the verifications $n$ times, whether they are test-based or infective.

This result means that it is very important that the verifications be cost effective.
Fortunately, as we saw in Sec.~\ref{classify} and particularly in Sec.~\ref{classify:smallrings} on the usage of the small rings, the existing countermeasures offer exactly that: optimized versions of Alg.~\ref{alg:straightforward} that use a variety of invariant properties to avoid replicating the two big exponentiations of the CRT computation.

\section{Building Better or Different Countermeasures}
\label{building}

In the two previous sections we learned a lot about current countermeasures and how they work.
We saw that to reduce their cost, most countermeasures use invariant properties to optimize the verification speed by using checksums on smaller numbers than the big ones which are manipulated by the protected algorithm.
Doing so, we understood how these optimizations work and the power of their underlying ideas.
In this section apply our newly acquired knowledge on the \emph{essence of countermeasures} in order to build the \emph{quintessence of countermeasures}.
Namely, we leverage our findings
to fix Shamir's countermeasure, and to drastically simplify the one of Vigilant, while at the same time transforming it to be infective instead of test-based.

\subsection{Correcting Shamir's Countermeasure}
\label{building:correction}

We saw that Shamir's countermeasure is broken in multiple ways, which has been known for a long time now.
To fix it without denaturing it, we need to verify the integrity of the recombination as well as the ones of the overrings moduli.
We can directly take these verifications from Aumüller \etal's countermeasure.
The result can be observed in Alg.~\ref{alg:fixed-shamir}.

\begin{algorithm*}
\algorithmConfig
\caption{CRT-RSA with a fixed version of Shamir's countermeasure \protect\\ {\smaller (new algorithm contributed in this paper)}}
\label{alg:fixed-shamir}
\Input{Message $M$, key $(p, q, d, i_q)$}
\Output{Signature $M^d \mod N$, or \textsf{error}}
\BlankLine
Choose a small random integer $r$. \;
\BlankLine
$p' = p \cdot r$ \;
$q' = q \cdot r$ \;
\lIf{$p' \not\equiv 0 \mod p$ \Or $q' \not\equiv 0 \mod q$\label{alg:fixed-shamir:p'q'}}{
  \Return\textsf{error}
}
\BlankLine
$S'_p = M^{d \mod \varphi(p')} \mod p'$ \tcp*[r]{Intermediate signature in $\Z_{pr}$}
$S'_q = M^{d \mod \varphi(q')} \mod q'$ \tcp*[r]{Intermediate signature in $\Z_{qr}$}
\lIf{$S'_p \not\equiv S'_q \mod r$\label{alg:fixed-shamir:S'pS'q}}{
  \Return\textsf{error}
}
\BlankLine
$S_p = S'_p \mod p$ \tcp*[r]{Retrieve intermediate signature in $\Z_p$}
$S_q = S'_q \mod q$ \tcp*[r]{Retrieve intermediate signature in $\Z_q$}
\BlankLine
$S = S_q + q \cdot (i_q \cdot (S_p - S_q) \mod p)$ \tcp*[r]{Recombination in $\Z_N$}
\lIf{$S \not\equiv S'_p \mod p$ \Or $S \not\equiv S'_q \mod q$}{
  \Return\textsf{error}
}
\BlankLine
\Return $S$
\end{algorithm*}

The additional tests on line~\ref{alg:fixed-shamir:p'q'} protect against transient faults on $p$ (resp. $q$) while computing $p'$ (resp. $q'$), which would amount to a randomization of $S'_p$ (resp. $S'_q$) while computing the intermediate signatures.
The additional test on line~\ref{alg:fixed-shamir:S'pS'q} verifies the integrity of the intermediate signature computations.

\subsection{Simplifying Vigilant's Countermeasure}
\label{building:simplification}

The mathematical tricks used in the Vigilant countermeasure are very powerful.
Their understanding enabled the optimization of his countermeasure to only need $3$ verifications, while the original version has $9$.
Our simplified version of the countermeasure can be seen in Alg.~\ref{alg:simplified-vigilant}.
Our idea is that it is not necessary to perform the checksum value replacements at lines~\ref{alg:vigilant:R1} and~\ref{alg:vigilant:R2} of Alg.~\ref{alg:vigilant} (see Sec.~\ref{classify:smallrings}).
What is more, if these replacements are not done, then the algorithm's computations carry the CRT-embedded checksum values until the end, and \emph{the integrity of the whole computation can be tested with a single verification} in $\Z_{r^2}$ (line~\ref{alg:simplified-vigilant:verif} of Alg.~\ref{alg:simplified-vigilant}).

This idea not only reduces the number of required verifications, which is in itself a security improvement as shown in Sec.~\ref{classify:test-infec},
but it also optimizes the countermeasure for speed and reduces its need for randomness (the computations of lines~\ref{alg:vigilant:R1} and~\ref{alg:vigilant:R2} of Alg.~\ref{alg:vigilant} are removed).

The two other tests that are left are the ones of lines~\ref{alg:vigilant:verifM'p} and~\ref{alg:vigilant:verifM'q} in Alg.~\ref{alg:vigilant}, which ensure that the original message $M$ has indeed been CRT-embedded in $M'_p$ and $M'_q$.
We take advantage of these two tests to verify the integrity of $N$ both modulo $p$ and modulo $q$ (lines~\ref{alg:simplified-vigilant:M'p} and~\ref{alg:simplified-vigilant:M'q} of Alg.~\ref{alg:simplified-vigilant}).

\begin{remark}
Note that we also made this version of the countermeasure infective, using the transformation method that we exposed in Sec.~\ref{classify:test-infec}.
As we said, any countermeasure can be transformed this way, for instance Alg.~\ref{alg:infective-aumuller} in the Appendix~\ref{sec:infective-aumuller} presents an infective variant of Aumüller \etal's countermeasure.
\end{remark}

\begin{algorithm*}
\algorithmConfig
\caption{CRT-RSA with our simplified Vigilant's countermeasure, under its infective avatar\protect\\ {\smaller (new algorithm contributed in this paper)}}
\label{alg:simplified-vigilant}
\Input{Message $M$, key $(p, q, d_p, d_q, i_q)$}
\Output{Signature $M^d \mod N$, or a random value in $\Z_N$}
\BlankLine
Choose a small random integer $r$. \;
$N = p \cdot q$ \;
\BlankLine
$p' = p \cdot r^2$ \;
$i_{pr} = p^{-1} \mod r^2$ \;
$M_p = M \mod p'$ \;
$B_p = p \cdot i_{pr}$ \;
$A_p = 1 - B_p \mod p'$ \;
$M'_p = A_p \cdot M_p + B_p \cdot (1 + r) \mod p'$ \tcp*[r]{CRT insertion of verification value in $M'_p$}
\BlankLine
$q' = q \cdot r^2$ \;
$i_{qr} = q^{-1} \mod r^2$ \;
$M_q = M \mod q'$ \;
$B_q = q \cdot i_{qr}$ \;
$A_q = 1 - B_q \mod q'$ \;
$M'_q = A_q \cdot M_q + B_q \cdot (1 + r) \mod q'$ \tcp*[r]{CRT insertion of verification value in $M'_q$}
\BlankLine
$S'_p = {M'_p}^{d_p \mod \varphi(p')} \mod p'$ \tcp*[r]{Intermediate signature in $\Z_{pr^2}$}
$S_{pr} = 1 + d_p \cdot r$ \tcp*[r]{Checksum in $\Z_{r^2}$ for $S'_p$}
\BlankLine
$c_p = M'_p + N - M + 1 \mod p$ \label{alg:simplified-vigilant:M'p} \;
\BlankLine
$S'_q = {M'_q}^{d_q \mod \varphi(q')} \mod q'$ \tcp*[r]{Intermediate signature in $\Z_{qr^2}$}
$S_{qr} = 1 + d_q \cdot r$ \tcp*[r]{Checksum in $\Z_{r^2}$ for $S'_q$}
\BlankLine
$c_q = M'_q + N - M + 1 \mod q$ \label{alg:simplified-vigilant:M'q} \;
\BlankLine
$S' = S'_q + q \cdot (i_q \cdot (S'_p - S'_q) \mod p')$ \tcp*[r]{Recombination in $\Z_{Nr^2}$}
$S_r = S_{qr} + q \cdot (i_q \cdot (S_{pr} - S_{qr}) \mod p')$ \tcp*[r]{Recombination checksum in $\Z_{r^2}$}
\BlankLine
$c_S = S' - S_r + 1 \mod r^2$ \label{alg:simplified-vigilant:verif} \;
\BlankLine
\Return $S = S'^{c_pc_qc_S} \mod N$ \tcp*[r]{Retrieve result in $\Z_N$}
\end{algorithm*}

\section{Conclusions and Perspectives}
\label{conclusions}

We studied the existing CRT-RSA algorithm countermeasures against fault-injection attacks, in particular the ones of Shamir's family.
In so doing, we got a deeper understanding of their ins and outs.
We obtained a few intermediate results:
the absence of conceptual distinction between test-based and infective countermeasures,
the fact that faults on the code (skipping instructions) can be captured by considering only faults on the data,
and the fact that the many countermeasures that we studied (and their variations) were actually applying a common protection strategy but optimized it in different ways.
These intermediate results allowed us to describe the design of a high-order countermeasure against our very generic fault model (comprised of randomizing, zeroing, and skipping faults).
Our design allows to build a countermeasure resisting $n$ faults for any $n$ at a very reduced cost (it consists in adding $n-1$ comparisons on small numbers).
We were also able to fix Shamir's countermeasure, and to drastically improve the one of Vigilant, going from $9$ verifications in the original countermeasure to only $3$, removing computations made useless, and reducing its need for randomness, while at the same time making it infective instead of test-based.

Except for those which rely on the fact that the protected algorithm takes the form of a CRT computation, the ideas presented in the various countermeasures can be applied to any modular arithmetic computation.
For instance, it could be done using the idea of Vigilant consisting in using the CRT to embed a known subring value in the manipulated numbers to serve as a checksum.
That would be the most obvious perspective for future work, as it would allow a generic approach against fault attacks and even automatic insertion of the countermeasure.

A study of Giraud's family of countermeasures in more detail would be beneficial to the community as well.

\section*{Acknowledgment}

We would like to thank Antoine Amarilli for his proofreading which greatly improved the editorial quality of our manuscript.

\bibliographystyle{alpha} 
\begin{footnotesize}
\bibliography{sca}
\end{footnotesize}

\clearpage
\appendix 

\section{Recovering $d$ and $e$ from $(p, q, d_p, d_q, i_q)$}
\label{sec-e_d}

We prove here the following proposition:
\begin{proposition}
It is possible to recover the private exponent $d$ and the public exponent $e$ from the $5$-tuple $(p, q, d_p, d_q, i_q)$
described in Sec.~\ref{sec:CRT_RSA}.
\label{prop:gp_recover_d_from_crt_key}
\end{proposition}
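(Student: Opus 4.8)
The plan is to recover $d$ first, and then to obtain $e$ as its modular inverse. A useful preliminary observation is that $i_q$ plays no role whatsoever in this recovery: only $p$, $q$, $d_p$, and $d_q$ are needed.

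First I would recombine $d_p$ and $d_q$. Since $d_p \doteq d \mod (p-1)$ and $d_q \doteq d \mod (q-1)$ both come from a genuine private exponent $d$, the two congruences $d \equiv d_p \pmod{p-1}$ and $d \equiv d_q \pmod{q-1}$ are simultaneously satisfiable. The main obstacle is that $p-1$ and $q-1$ are both even, hence \emph{not} coprime, so the ordinary CRT does not apply. Instead I would invoke the generalized CRT: writing $g = \gcd(p-1,q-1)$, the system is consistent precisely when $g \mid (d_p - d_q)$---which holds automatically here since both reduce a common $d$---and the solution is then unique modulo $\mathrm{lcm}(p-1,q-1) = \lambda(N)$. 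Concretely, I would run the extended Euclidean algorithm to obtain integers $u,v$ with $u(p-1)+v(q-1)=g$, and then set $d \equiv d_p + (p-1)\,u\,\tfrac{d_q-d_p}{g} \pmod{\lambda(N)}$, which one checks reduces to $d_p$ modulo $p-1$ and to $d_q$ modulo $q-1$.

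Second, I would argue that the recovered value is a legitimate private exponent. What the previous step yields is $d \bmod \lambda(N)$; even if the original key was built from a $d$ reduced modulo $\varphi(N)$ rather than $\lambda(N)$, any representative modulo $\lambda(N)$ still satisfies $M^{ed} \equiv M \pmod N$ for \emph{every} $M$. Indeed, $\lambda(N)$ is the exponent of the group $(\Z/N\Z)^\times$, and the identity extends to all residues (including non-units) by applying it separately modulo $p$ and modulo $q$ and recombining via CRT. Hence recovering $d \bmod \lambda(N)$ is as good as recovering $d$ for every cryptographic purpose. Finally, recovering $e$ is immediate: since $d$ is invertible modulo $\lambda(N)$ (being a valid private exponent, so $\gcd(d,\lambda(N))=1$), I would compute $e \equiv d^{-1} \pmod{\lambda(N)}$ with one further call to the extended Euclidean algorithm. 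The only genuine subtlety is the non-coprimality handled by the generalized CRT in the first step; everything else is routine modular arithmetic.
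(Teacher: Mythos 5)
Your proof is correct, but it takes a genuinely different route from the paper's. You attack the non-coprimality of $p-1$ and $q-1$ head-on with the generalized (non-coprime) CRT: the consistency condition $g \mid (d_p-d_q)$ holds automatically, a B\'ezout relation $u(p-1)+v(q-1)=g$ from the extended Euclidean algorithm yields the closed-form solution $d \equiv d_p + (p-1)\,u\,\tfrac{d_q-d_p}{g} \pmod{\lambda(N)}$, and uniqueness holds modulo $\mathrm{lcm}(p-1,q-1)=\lambda(N)$. The paper instead refuses to leave the coprime setting: it first splits $\lambda(N)$ into two \emph{coprime} factors $p_2 \mid (p-1)$ and $q_2 \mid (q-1)$ with $p_2 \cdot q_2 = \lambda(N)$, via an iterative gcd-transfer loop (Alg.~\ref{alg:e_d}) whose efficiency is argued from $\gcd(p-1,q-1)$ being small, and then recovers $d$ from $d_{p_2}=d_p \bmod p_2$ and $d_{q_2}=d_q \bmod q_2$ by Garner's recombination --- the same primitive that CRT-RSA itself uses. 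Your route buys directness: one extended-Euclid call, a closed formula, no loop, and no assumption on the size of the gcd. The paper's route buys elementarity, in that it only ever invokes the ordinary two-coprime-moduli CRT. On the remaining points the two proofs coincide: both recover $d$ only modulo $\lambda(N)$ and justify that this suffices (you by noting that $\lambda(N)$ is the exponent of $(\Z/N\Z)^\times$ and extending to non-units by reasoning modulo $p$ and $q$ separately; the paper by appealing to the remark at the end of Sec.~\ref{sec:RSA} that $d$ may be taken as $e^{-1} \bmod \lambda(N)$), both then obtain $e$ as $d^{-1} \bmod \lambda(N)$, and both, as you observe, never actually need $i_q$.
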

\begin{proof}
Clearly, $p-1$ and $q-1$ are neither prime, nor coprimes (they have at least $2$ as a common factor).
Thus, proving Prop.~\ref{prop:gp_recover_d_from_crt_key} is not a trivial application of the Chinese Remainder Theorem.
The proof we provide is elementary, but to our best knowledge, it has never been published before.

The numbers $p_1 = \frac{p-1}{\gcd(p-1,q-1)}$ and $q_1 = \frac{q-1}{\gcd(p-1,q-1)}$ are coprime,
but there product is not equal to $\lambda(N)$.
There is a factor $\gcd(p-1,q-1)$ missing, since $\lambda(N) = p_1 \cdot q_1 \cdot \gcd(p-1,q-1)$.

Now, $\gcd(p-1,q-1)$ is expected to be small.
Thus, the following Alg.~\ref{alg:e_d} can be applied efficiently.
In this algorithm, the invariant is that $p_2$ and $q_2$, initially equal to $p_1$ and $p_2$, remain coprime.
Moreover, they keep on increasing whereas $r_2$, initialized to $r_1 = \gcd(p-1,q-1)$, keeps on decreasing till $1$.

\begin{algorithm}
\algorithmConfig
\caption{Factorization of $\lambda(N)$ into two coprimes, multiples of $p_1$ and $q_1$ respectively.}
\label{alg:e_d}
\Input{$p_1=\frac{p-1}{\gcd(p-1,q-1)}$, $q_1=\frac{q-1}{\gcd(p-1,q-1)}$ and $r_1=\gcd(p-1,q-1)$}
\Output{$(p_2, q_2)$, coprime, such as $p_2 \cdot q_2 = \lambda(N)$}
\BlankLine
$(p_2, q_2, r_2) \gets (p_1, q_1, r_1)$ \;
\BlankLine
$g \gets \gcd(p_2, r_2)$ \;
\While{$g \neq 1$}{
$p_2 \gets p_2 \cdot g$ \;
$r_2 \gets r_2 / g$ \;
$g \gets \gcd(p_2, r_2)$ \;
}
\BlankLine
$g \gets \gcd(q_2, r_2)$ \;
\While{$g \neq 1$}{
$q_2 \gets q_2 \cdot g$ \;
$r_2 \gets r_2 / g$ \;
$g \gets \gcd(q_2, r_2)$ }
\tcp*[r]{$p_2$, $q_2$ and $r_2$ are now coprime}
$q_2 \gets q_2 \cdot r_2$
\tcp*[r]{$p_2 \gets p_2 \cdot r_2$ would work equally}
$(r_2 \gets r_2 / r_2 = 1)$ \tcp*[r]{For more pedagogy}
\Return{$(p_2, q_2)$}
\end{algorithm}


Let us denote $p_2$ and $q_2$ the two outputs of Alg.~\ref{alg:e_d}, we have:
\begin{compactitem}
\item $d_{p_2} = d_p \mod p_2$, since $p_2 | (p-1)$;
\item $d_{q_2} = d_q \mod q_2$, since $q_2 | (q-1)$;
\item $i_{12} = {p_2}^{-1} \mod q_2$, since $p_2$ and $q_2$ are coprime.
\end{compactitem}
We can apply Garner's formula to recover $d$:
\begin{align}
d = d_{p_2} + p_2 \cdot (( i_{12} \cdot ( d_{q_2} - d_{p_2} )) \mod q_2) \enspace.
\end{align}
By Garner, we know that $0\leq d < p_2 \cdot q_2 = \lambda(N)$,
which is consistent with the remark made in the last sentence of Sec.~\ref{sec:RSA}.

Once we know the private exponent $d$, the public exponent $e$ can be computed as the inverse of $d$ modulo $\lambda(N)$.
\end{proof}

\section{Infective Aumüller CRT-RSA} 
\label{sec:infective-aumuller}

The infective variant of Aumüller protection against CRT-RSA is detailed in Alg.~\ref{alg:infective-aumuller}.

\begin{algorithm}
\algorithmConfig
\caption{CRT-RSA with Aumüller \etal's countermeasure\footref{fn:no-spa}, under its infective avatar\protect\\ {\smaller (new algorithm contributed in this paper)}}
\label{alg:infective-aumuller}
\Input{Message $M$, key $(p, q, d_p, d_q, i_q)$}
\Output{Signature $M^d \mod N$, or a random value}
\BlankLine
Choose a small random integer $r$. \;
\BlankLine
$p' = p \cdot r$ \;
$c_1 = p' + 1 \mod p$ \;
\BlankLine
$q' = q \cdot r$ \;
$c_2 = q' + 1 \mod q$ \;
\BlankLine
$S'_p = M^{d_p \mod \varphi(p')} \mod p'$ \tcp*[r]{Intermediate signature in $\Z_{pr}$}
$S'_q = M^{d_q \mod \varphi(q')} \mod q'$ \tcp*[r]{Intermediate signature in $\Z_{qr}$}
\BlankLine
$S_p = S'_p \mod p$ \tcp*[r]{Retrieve intermediate signature in $\Z_p$}
$S_q = S'_q \mod q$ \tcp*[r]{Retrieve intermediate signature in $\Z_q$}
\BlankLine
$S = S_q + q \cdot (i_q \cdot (S_p - S_q) \mod p)$ \tcp*[r]{Recombination in $\Z_N$}
$c_3 = S - S'_p + 1 \mod p$ \;
$c_4 = S - S'_q + 1 \mod q$ \;
\BlankLine
$S_{pr} = S'_p \mod r$ \tcp*[r]{Checksum of $S_p$ in $\Z_{r}$}
$S_{qr} = S'_q \mod r$ \tcp*[r]{Checksum of $S_q$ in $\Z_{r}$}
$c_5 = {S_{pr}}^{d_q \mod \varphi(r)} - {S_{qr}}^{d_p \mod \varphi(r)} + 1 \mod r$ \;
\BlankLine
\Return $S^{c_1c_2c_3c_4c_5}$
\end{algorithm}

\end{document}